\newtheorem{definition}{Definition}
\newtheorem{theorem}{Theorem}
\newtheorem{statement}{Statement}
\newtheorem{lemma}{Lemma}
\title{Topology-Aware Exploration of Energy-Based Models Equilibrium: Toric QC-LDPC Codes and Hyperbolic MET QC-LDPC Codes}
\author{%
  \IEEEauthorblockN{Vasiliy~Usatyuk}
  \IEEEauthorblockA{R\&D Department, T8\\ 
                    Email: L@lcrypto.com}
  \and
  \IEEEauthorblockN{Denis Sapozhnikov}
  \IEEEauthorblockA{R\&D Department, T8\\ 
                     Email: D@lcrypto.com}
   \and
  \IEEEauthorblockN{Sergey Egorov}
  \IEEEauthorblockA{Computer Science Department, South-West State University\\ 
                     Email: sie58@mail.ru}

                     }
\begin{document}
\maketitle
\begin{abstract}

This paper presents a %groundbreaking 
method for achieving equilibrium in the ISING Hamiltonian when confronted with unevenly distributed charges on an irregular grid. Employing (Multi-Edge) QC-LDPC codes and the Boltzmann machine, our approach involves dimensionally expanding the system, substituting charges with circulants, and representing distances through circulant shifts. This results in a systematic mapping of the charge system onto a space, transforming the irregular grid into a uniform configuration, applicable to Torical and Circular Hyperboloid Topologies. The paper covers fundamental definitions and notations related to QC-LDPC Codes, Multi-Edge QC-LDPC codes, and the Boltzmann machine. It explores the marginalization problem in code on the graph probabilistic models for evaluating the partition function, encompassing exact and approximate estimation techniques. Rigorous proof is provided for the attainability of equilibrium states for the Boltzmann machine under Torical and Circular Hyperboloid, paving the way for the application of our methodology. Practical applications of our approach are investigated in Finite Geometry QC-LDPC Codes, specifically in Material Science. The paper further explores its effectiveness in the realm of Natural Language Processing Transformer Deep Neural Networks, examining Generalized Repeat Accumulate Codes, Spatially-Coupled and Cage-Graph QC-LDPC Codes. The versatile and impactful nature of our topology-aware hardware-efficient quasi-cycle codes equilibrium method is showcased across diverse scientific domains without the use of specific section delineations.
   \end{abstract}\vspace{-1mm}
% IEEEtran.cls defaults to using nonbold math in the Abstract.
% This preserves the distinction between vectors and scalars. However,
% if the journal you are submitting to favors bold math in the abstract,
% then you can use LaTeX's standard command \boldmath at the very start
% of the abstract to achieve this. Many IEEE journals frown on math
% in the abstract anyway.
% Note that keywords are not normally used for peerreview papers.
%\begin{IEEEkeywords}
%IEEEtran, journal, \LaTeX, paper, template.
%\end{IEEEkeywords}

% For peer review papers, you can put extra information on the cover
% page as needed:
% \ifCLASSOPTIONpeerreview
% \begin{center} \bfseries EDICS Category: 3-BBND \end{center}
% \fi
%
% For peerreview papers, this IEEEtran command inserts a page break and
% creates the second title. It will be ignored for other modes.
%\IEEEpeerreviewmaketitle

%\newcommand{\card}{\mathrm{card}}
\newcommand{\F}{\mathbb{F}_2}
\newcommand{\Z}{\mathbb{Z}}
\newcommand{\R}{\mathbb{R}}
\newcommand{\supp}{\mathrm{supp}\,}
\renewcommand{\P}{\mathbb{P}}
\renewcommand{\mod}{\, \mathrm{mod} \,}
\newcommand{\Wcal}{\mathcal{W}}

\section{Introduction}

The integration of graph models has had a transformative impact across diverse domains, playing a crucial role in fields such as forward error correction, source coding, compressed sensing, classical and quantum machine learning/computation, and control theory,  \cite{Ks01,Forney01,Forney13,Forney18}.

Despite notable successes in applying graph models to shallow networks and demonstrating achievements in lazy regime machine learning \cite{lazy}, challenges persist. These encompass issues like optimal neural network architectures, adversarial robustness in deep neural networks (DNNs), optimal preconditions for DNN training, efficient estimation of the Hessian matrix for quasi-Newton DNN training, and achieving sparse structural low-complexity learning and inference in DNNs. Additionally, there is a notable gap in establishing proofs for both constructive and non-constructive forms of Generalized Shannon's Theorem for nonlinear channels.

These challenges, spanning information theory, control theory, topology, machine learning, and cryptography, are effectively reduced to the framework of the generalized Poincaré theorem, specifically the Novikov conjecture \cite{Novikov1,Novikov2,James2000}. Through a topology surgery approach, involving a sufficient subspace metric (grid) to represent the $n$-dimensional manifold using low-dimensional decomposition (embedding), the paper tackles these challenges, \cite{Thurston1982,Sc83,Gri02,Gri03,Gri032}.

The residual error from topological surgery becomes a guiding force, leading toward the formulation and understanding of the generalized Shannon theorem. This holistic approach addresses challenges spanning various scientific domains, emphasizing the intricate connections between topology, information theory, and machine learning.

The paper aims to provide partial answers to questions by narrowing down the exploration of complex system dynamics to constructing topology-aware energy-based codes on graphs. It delves into the application of topology in information theory, particularly in Channel Coding, across diverse scientific domains like Material Science, Deep Neural Networks, Control Theory, and beyond. This dual perspective contributes to the advancement of information theory and broader interdisciplinary scientific endeavors. The study reveals that choosing Topology, whether Torical or Hyperbolic, for representing non-linear fields inherently gives rise to quasi-cyclic (Low density parity-check) codes on graphs. The codewords of these codes serve as energy minima, while pseudo-codewords form local minima around these symmetry points. The size of the circulant (Automorphism) of the quasi-cyclic code plays a pivotal role in determining the number of equivalent energy minima, unveiling a novel perspective on energy landscapes.

From a mathematical perspective, quasi-cyclic codes can be conceptualized as forming an irregular grid in space. Their inherent strong symmetry designates them as linearizer manifolds, crucial in studying the Hamiltonian and the conservation of physical quantities across spatial and temporal dimensions. Linearizers have the capability to transform an irregular grid into a regular one, as explored in works such as \cite{Kolmogorov1954,Mos62,Arn1963}.

%In the publication \cite{Equilibrium}, a connection is revealed between sophisticated DNN architectures and graph models like Generalized Irregular Repeat Accumulate and Cage-graph graph models. Delving into the landscape of Bethe-Hessian and Quasi-Newton landscape Hessian in DNNs goes beyond the conventional lazy regime, sparking inquiries into the symmetry and asymmetry of statistical manifolds within arbitrary non-linear channels \cite{lazy}. The clarification of ground states involves embedding tori and circular hyperboloids, revealing their equivalence to quasi-cyclic codes. The count of energy minima corresponds to the size of the circulant, underscoring the crucial role of quasi-cyclic codes in linearization \cite{Equilibrium}.

Introducing a pioneering Topology-Aware method for achieving equilibrium in the Ising System Hamiltonian with unevenly distributed charges, the approach leverages dimensional expansion through the substitution of charges with circulants and distances with circulant shifts. This transformative process results in a spatial mapping that converts the initially irregular charge grid into a more uniform and structured configuration. The proposed methodology, with the invariance of Quasi-Cyclic rotations and the symmetry of codewords, coupled with the asymmetry of pseudocodewords, gives rise to structural (quasi-cyclic) sparsity, offering computational efficiency and storage optimization advantages. This makes the approach well-suited for applications dealing with large-scale Dynamic Systems.

The paper is structured as follows. In Section II, we present fundamental definitions and notations pertaining to QC-LDPC Codes, Multi-Edge QC-LDPC codes, and the Boltzmann machine (Ising model). Section III provides an overview of the marginalization problem in code on the graph probabilistic models, addressing the evaluation of the partition function (normalization). We explore exact estimation techniques using determinants and permanents, as well as approximate estimation employing the Bethe-permanent. Section IV is dedicated to proving that equilibrium states (ground states) of the Boltzmann machine, characterized by unequal charges on an irregular grid, are attainable under Torical and Circular Hyperboloid Topologies. This involves a systematic expansion of the system's charge dimension, with charges being replaced by multiple circulants and distances represented through circulant shifts. Section V delves into the practical application of this approach in Finite Geometry QC-LDPC Codes and its relevance to Material Science. Finally, Section VI investigates the application of this methodology to Generalized Repeat Accumulate Codes (\cite{DivMcel98,Li05}) and Cage-Graph QC-LDPC Codes (\cite{Ma07,Ke08,VeD08,Che15}), showcasing its effectiveness (LRA challenge, \cite{LRA21}) in the context of Natural Language Processing Transformer Deep Neural Networks Architectures: "MEGA" \cite{Ma22}, "ChordMixer" \cite{Kha22}, "CDIL" \cite{CDL}.

\section{Basic definition}

LDPC (Low-Density Parity-Check) codes, with parameters $[N, K]$, are linear error-correcting codes represented by sparse parity-check matrices. Here, $N$ is the length of the codeword, and $K$ is the number of information bits \cite{RR08,RyanShu09}.
The parity-check matrix $H$ specifies parity check equations and can be visualized as a Tanner graph. In the case of the parity-check matrix given by:

\begin{equation} 
H = \begin{bmatrix}  
1 & 0 & 1 & 1 & 1  \\
1 & 1 & 0 & 0 & 0  \\
0 & 1 & 1 & 1 & 1  \\ 
\end{bmatrix},
\end{equation} 

the corresponding Tanner graph is shown in Fig.~\ref{proto}, left.
 \begin{figure}
\centering
\includegraphics[width=\columnwidth]{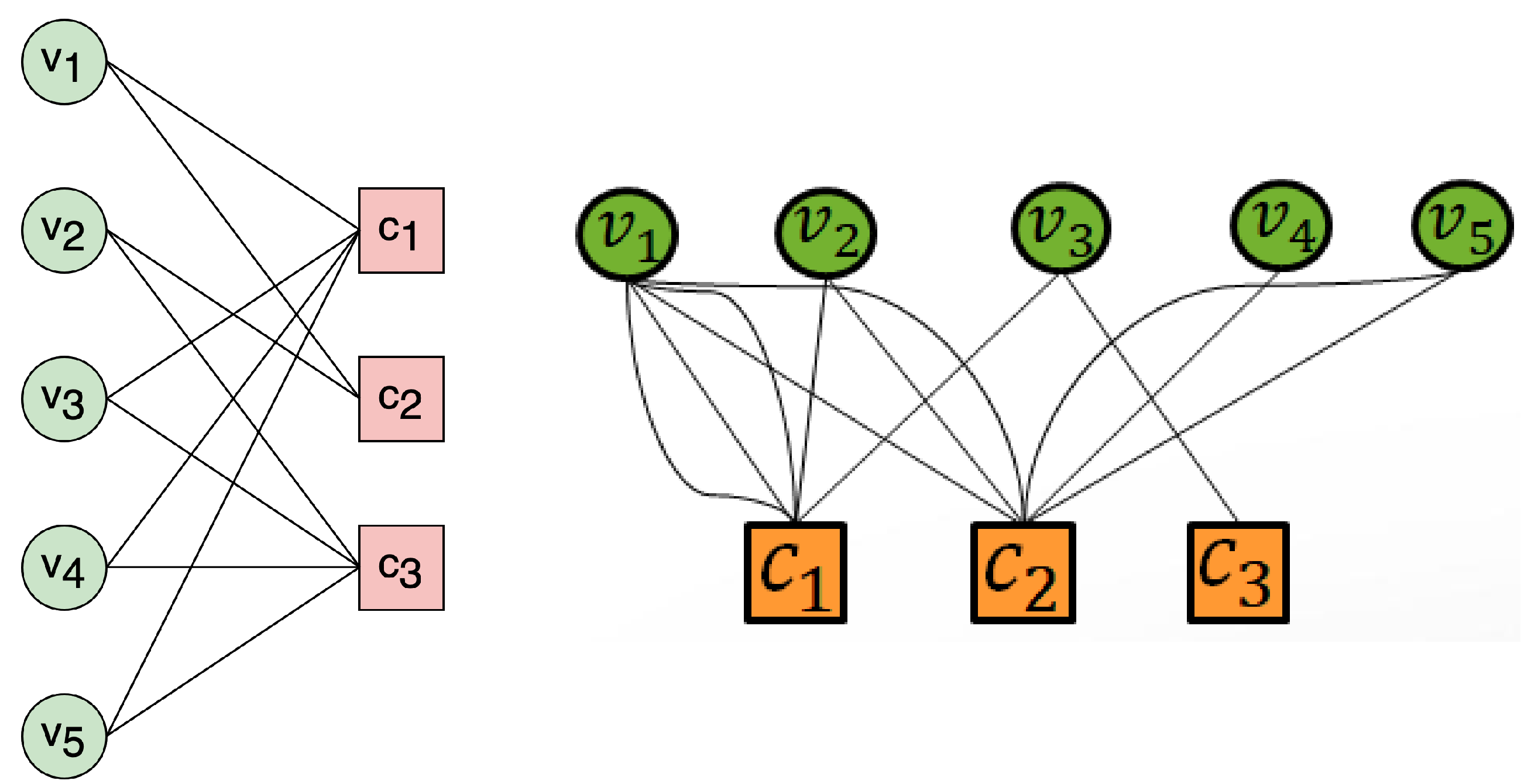} 
  \caption{(left) Tanner Graph of $H$ parity-check matrix (right) Multi-graph of $H_2$ parity-check matrix}
  \label{proto}
\end{figure}

Quasi-Cyclic Low-Density Parity-Check (QC-LDPC) codes, denoted as $[N, K]$, belong to the class of linear block error-correcting codes with a special quasi-cyclic parity-check matrix $H$. In an $(N, K)$ QC-LDPC code, $N$ represents the code length (number of codeword bits), and $K$ represents the number of message bits (length of the original message). The remaining bits in the codeword, i.e., $(N - K)$, are parity bits. The Tanner graph of a QC-LDPC code is described by the parity-check matrix $H$, consisting of square blocks that are either zero matrices or circulant permutation matrices.

The $L \times L$ circulant permutation matrix $P$ is defined as:

\begin{equation} 
P_{ij} = 
\begin{cases}
1, & \text{if } i+1 \equiv j \mod L \\
0, & \text{otherwise}.
\end{cases}
\end{equation} 

Here, $P_k$ represents the circulant permutation matrix (CPM) that shifts the identity matrix $I$ to the right by $k$ times for any $k$, $0 \leq k \leq L-1$. The zero matrix is denoted as $I^{-1}$, and the set $\{-1, 0, 1, \ldots, L-1\}$ is denoted by $A_{L}$.

Suppose the matrix $H$ of size $m L \times n L$ is defined as follows:

\begin{equation} 
H = \left[\begin{array}{cccc} {P_{a_{11}}} & {P_{a_{12}}} & {\cdots} & {P_{a_{1n}}} \\ {P_{a_{21}}} & {P_{a_{22}}} & {\cdots} & {P_{a_{2n}}} \\ {\vdots} & {\vdots} & {\ddots} & {\vdots} \\ {P_{a_{m1}}} & {P_{a_{m2}}} & {\cdots} & {P_{a_{mn}}} \end{array}\right] ,
\end{equation} 
where $a_{i,j} \in A_{L}$, and $L$ is the circulant size of $H$. A code $C$ with parity-check matrix $H$ is referred to as a QC-LDPC code. The exponent matrix $E(H) = (E_{ij}(H))$ of $H$ is given by:

\begin{equation} 
E(H) = \left[\begin{array}{cccc} {a_{11}} & {a_{12}} & {\cdots} & {a_{1n}} \\ {a_{21}} & {a_{22}} & {\cdots} & {a_{2n}} \\ {\vdots} & {\vdots} & {\ddots} & {\vdots} \\ {a_{m1}} & {a_{m2}} & {\cdots} & {a_{mn}} \end{array}\right],
\end{equation} 
where $E_{ij}(H) = a_{ij}$. The protograph mother matrix or base graph $M(H)$ is a binary matrix of size $m \times n$ obtained by replacing $-1$'s and other integers with $0$ and $1$, respectively, in $E(H)$.

The Tanner graph of matrix $H$ forms a cycle if the equation $\sum_{i=1}^{2l}(-1)^i a_i\equiv 0 \mod L$ is satisfied. A sub-graph of the Tanner graph, formed by cycles or cycle overlap in matrix $H$, includes $a$ variable nodes and $b$ odd-degree checks, known as a trapping set $TS(a,b)$, as defined in~\cite{Vasic09}. Fig. \ref{Trapping sets} illustrates examples such as TS(5,3) generated by the overlap of three 8-cycles and TS(4,4) formed by cycle 8 in the Tanner graph. The minimum codeword determining the code distance ($d_{min}$) of the LDPC code corresponds to $TS(a, 0)$, where $a=d_{min}$. Trapping sets form pseudocodewords ~\cite{Sm13, Sm15, Huang, Vo13, Vo22}.

%The risk of a trapping set depends on the quantity of variable nodes that can result in decoding failure in the event of errors. When there are 3 errors in odd degree check nodes, the TS(5,3) is more dangerous than the TS(4,4), which produces only 4 errors in variable nodes if there are 4 errors in odd degree check nodes and can be interpreted as a weight 4 pseudocodeword. Breaking the worst cycles can improve the weight spectrum of a pseudocodeword and reduce the likelihood of an error-floor. Its risk is closely tied to the decoder and may be altered by changing the decoder's parameters. The minimum codeword that determines the code distance (Hamming distance) $d_{min}$ of the LDPC code corresponds to the $TS(a, 0)$, where $a=d_{min}$. 

\begin{figure}
\centering
\includegraphics[width=63mm, viewport=30.00mm 194.40mm 136.75mm 277.00mm]{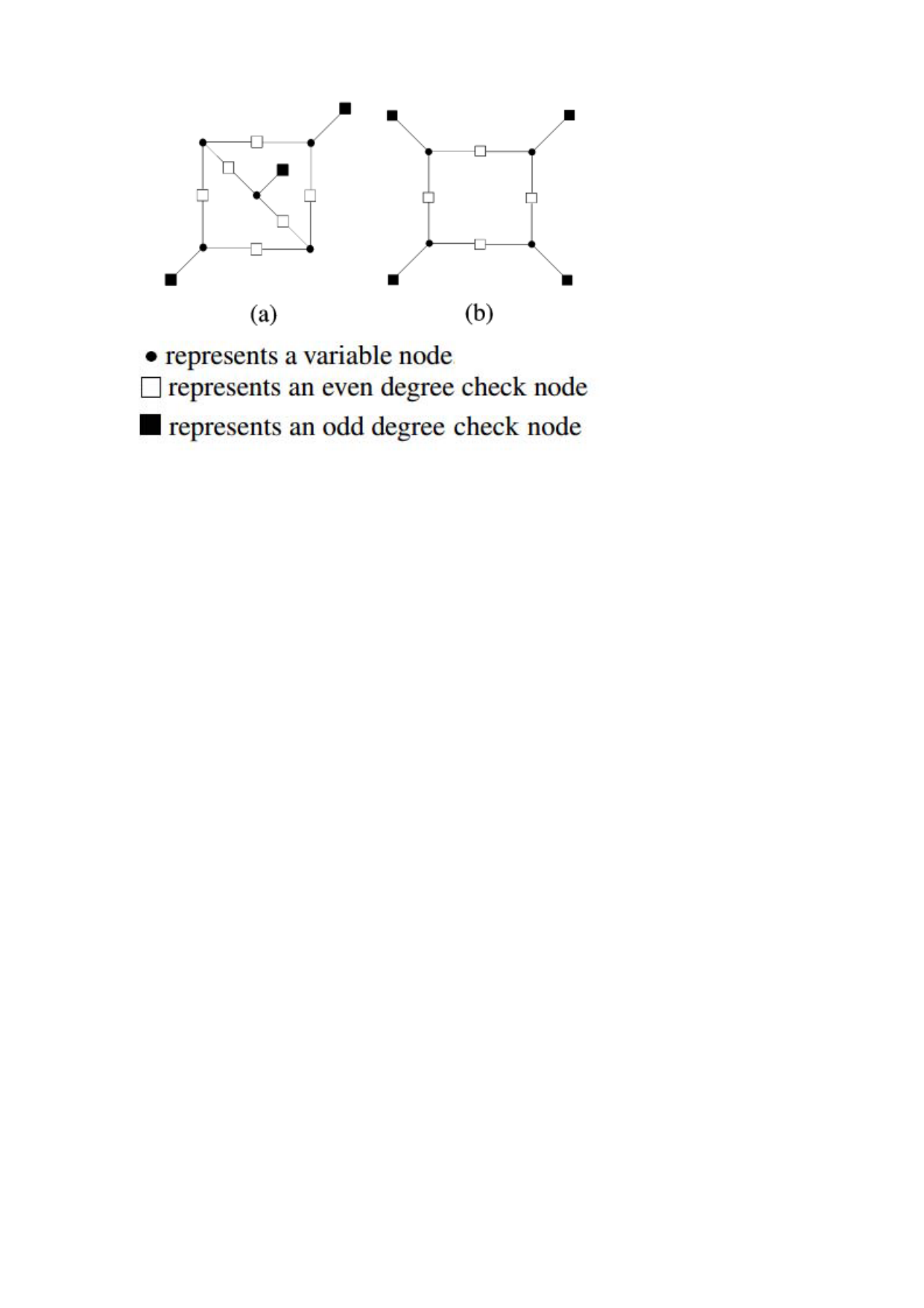} 
  \caption{Graphical representation of  Trapping sets: a) TS(5,3), b) TS(4,4), \cite{Vasic09}}
  \label{Trapping sets}
\end{figure}

\subsection{Multi-edge Type (Protograph) QC-LDPC Codes}

A QC-LDPC code and bipartite Tanner graph can be generalized to a multi-graph. The multi-graph corresponding to the parity-check matrix:

\begin{equation} 
H_2=\left(\begin{array}{ccccc} {I^{1} +I^{2} +I^{7} } & {I^{9} } & {I^{23} } & {0} & {0} \\ {I^{12} +I^{37} } & {I^{19} } & {0} & {I^{32} } & {I^{11} +I^{12} } \\ {0} & {0} & {I^{33} } & {0} & {0} \end{array}\right)
\end{equation} 
is shown in Figure \ref{proto} (right), where $I$ is the circulant permutation matrix used in multi-edge (protograph) QC-LDPC codes. In the multi-graph representation, the parity-check matrix is depicted as a collection of edges connecting vertices (circulants of weight more than 1). The multi-graph representation offers a more intuitive graphical perspective compared to hypergraph representation, as it directly shows the connections between variables and check nodes (constraints).

Multi-edge type low-density parity-check (LDPC) codes represent a class of error-correcting codes that utilize multiple types of edges to connect variable nodes and check nodes. Initially proposed by Richardson and Urbanke, these codes were later extended by various researchers. The degree distribution of multi-edge type LDPC codes is defined through a joint distribution of variable and check node degrees, with variable node degrees partitioned into multiple classes. This partitioning provides flexibility in code design, enhancing performance. Protographs are employed to visualize the structure of these codes, offering a graphical representation of the connectivity pattern between variable and check nodes.

Protographs were introduced by David MacKay and Radford M. Neal in the context of regular LDPC codes. Dariush Divsalar later proposed a protograph-based approach for designing multi-edge type LDPC codes. Divsalar's protograph simplifies the representation of code structure, facilitating analysis and design. The use of protographs has become a popular tool in the design and analysis of LDPC codes. The two approaches, Richardson-Urbanke's MET LDPC ~\cite{Ri02} and MacKay-Neal-Divsalar Protograph LDPC ~\cite{Di05}, represent different ways of describing the same type of LDPC code.

\subsection{Spatially-Coupled Convolutional QC-LDPC Codes}

The spatially-coupled MET QC-LDPC code is constructed by coupling together a series of $L$ disjoint block MET QC-LDPC codes into a single coupled chain through an edge-spreading operation \cite{FelZi99}, as depicted in Fig. \ref{Fig5}. A $(W, C, N, L)$ tail-biting spatial-coupled QC-LDPC code $C(H)$ of length $W$ and multiple $L$ can be defined by a parity-check base matrix:

\begin{equation} 
H=\left[\begin{array}{cccc} {A_{0,0} } & {A_{1,0} } & {...} & {A_{WL-1,0} } \\ {A_{0,1} } & {A_{1,1} } & {} & {A_{WL-1,1} } \\ {\vdots } & {\vdots } & {\ddots } & {\vdots } \\ {A_{0,CL-1} } & {A_{1,CL-1} } & {...} & {A_{WL-1,CL-1} } \end{array}\right]
\end{equation} 
where $0\le i\le CL-1$, $1\le j\le WL-1$, and CPM $A_{i,j}$ represents either an $N\times N$ zero matrix $Z$ or the $N\times N$ circulant permutation matrix $I\left(p_{i,j}\right)$ obtained by cyclically right-shifting the $N\times N$ identity matrix $I\left(0\right)$ by $p_{i,j}$ positions. Denote by CPM-column the i-th column of $A_{0,i},A_{1,i},\dots,A_{CL-1,i}$.

For a specific Tail-biting Spatial-couple MET QC-LDPC code, we define the corresponding CPM-shifts matrix as the matrix of circulant shifts that defines the QC-LDPC code:
\begin{equation} 
B=\left[\begin{array}{cccc} {b_{0,0} } & {b_{1,0} } & {...} & {b_{W-1,0} } \\ {b_{0,1} } & {b_{1,1} } & {} & {b_{W-1,1} } \\ {\vdots } & {\vdots } & {\ddots } & {\vdots } \\ {b_{0,C-1} } & {b_{1,C-1} } & {...} & {b_{W-1,C-1} } \end{array}\right].
\end{equation} 

Define vector D of shifts, Fig. \ref{Fig4}: $D^{T}=\left[ {d_{0} },  {d_{1} },  {\cdots },  {d_{C-1} }\right] ,$ with conditions $d_0=0$, $d_i<\ d_j\ {\rm iff\ }i<j,\ \forall i\ d_i<W$. If $\frac{W}{C}\ $ is an integer, vector $D$ can be defined as $d_i=\frac{i\ W}{C}{\rm \ }\ $.

For $0\le i\le CL-1$ and $0\le j\le WL-1$, CPMs $A_{i,j}$ in parity-check matrix $H$ can be calculated:
if $i'\ {\rm mod\ }\ WL\ \le i\ <\left(i'+W\right)\ {\rm mod\ }\ WL$, then $A_{i,j}=I(b_{i\ {\rm mod\ W}, j\ {\rm mod\ C\ }})$, else $A_{i,j}=Z$. Summary $C(H)$ is represented in Fig. \ref{Fig7}. In papers \cite{TanSSFCost04,JiaPsP06}, it was shown that QC-LDPC codes are equivalent to terminated Spatially-Coupled and Serial Turbo convolutional codes. Trapping sets pseudocodewords and related Bethe-permanent reach their upper bound \cite{SmaPuVon09,MiPuLeCo11,Sm13}, energy-landscape local minimum, as we shall show later.

\begin{figure}
\centering
\includegraphics[width=3.5in]{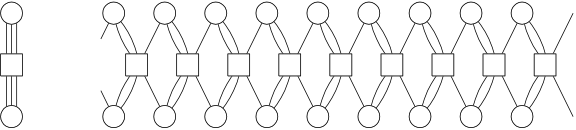}
  \caption{Example of protograph and spatial coupled set of protograph}
  \label{Fig5}
\end{figure}

\begin{figure}
\centering
\includegraphics[width=3.5in]{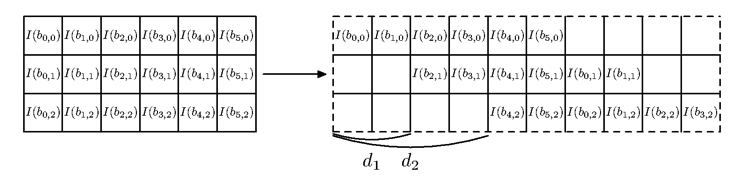}
  \caption{One $W \times C$ CPM-block of circulants $A(b_{(i,j)})$ shifted by values of vector D}
  \label{Fig4}
\end{figure}

\begin{figure}
\centering
\includegraphics[width=3.5in]{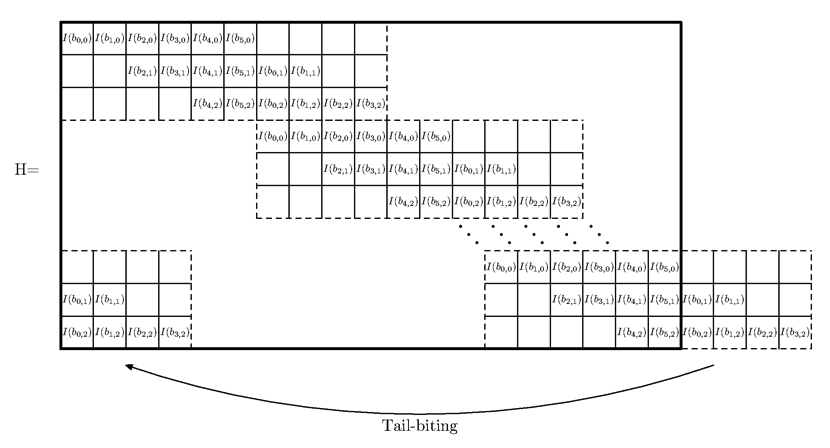}
  \caption{Tail-biting Spatial-Couple QC-LDPC code parity-check matrix $C(H)$}
  \label{Fig7}
\end{figure}

\subsection{Boltzmann machine}

A Boltzmann machine is a network comprising interconnected units (refer to Figure~\ref{fig:BM:BM}). Let \(N\) denote the number of units, each assuming a binary value (0 or 1). The random variable \(X_i\) represents the value of the \(i\)-th unit (\(i \in [1, N]\)). We use the column vector \(\boldsymbol{X}\) to denote the random values of all \(N\) units. The Boltzmann machine involves two types of parameters: bias (\(b_i\) for the \(i\)-th unit) and weight (\(w_{i,j}\) between units \(i\) and \(j\) for \((i, j) \in [1, N-1] \times [i+1, N]\)). The bias for all units is represented by the column vector \(\mathbf{b}\), and the weight for all unit pairs is represented by the matrix \(\mathbf{W}\), where the \((i, j)\)-th element of \(\mathbf{W}\) is \(w_{i,j}\), with \(w_{i,j} = 0\) for \(i \ge j\) and disconnected unit pairs. The parameters are collectively denoted as:

\begin{equation} 
\theta \equiv (b_1, \ldots, b_N, w_{1,2}, \ldots, w_{N-1,N}),
\end{equation} 
which is also expressed as \(\theta = (\mathbf{b}, \mathbf{W})\).

\begin{figure}[tb]
  \centering
  \includegraphics[width=0.5\linewidth]{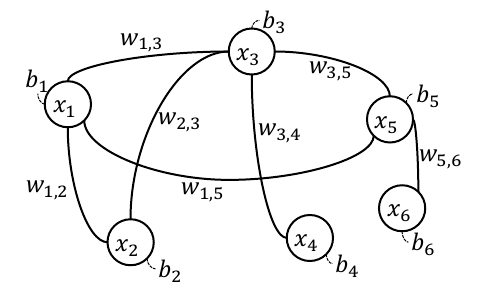}
  \caption{A Boltzmann machine, \cite{OsogamiIBM2}.}
  \label{fig:BM:BM}
\end{figure}

The Boltzmann machine's energy, denoted as \(E_\theta(\mathbf{x})\), is computed using the following expression:

\begin{multline}
E_\theta(\mathbf{x}) = - \sum_{i=1}^N b_i \, x_i - \sum_{i=1}^{N-1}\sum_{j=i+1}^N w_{i,j} \, x_i \, x_j = \\ - \mathbf{b}^\top \mathbf{x} - \mathbf{x}^\top \mathbf{W} \mathbf{x}.
\label{eq:BM:energy}
\end{multline}

Based on this energy formulation, the Boltzmann machine establishes the probability distribution over binary patterns through the expression:

\begin{equation} 
P_\theta(\mathbf{x}) = \frac{\exp\left( -E_\theta(\mathbf{x}) \right)}{\sum_{\mathbf{\tilde x}}\exp\left( -E_\theta(\mathbf{\tilde x}) \right)},
\label{eq:BM:prob}
\end{equation} 

where the summation with respect to \(\mathbf{\tilde x}\) spans all possible \(N\)-bit binary values. The denominator, referred to as the partition function, is denoted as:

\begin{equation} 
Z \equiv \sum_{\mathbf{\tilde x}} \exp\left( -E_\theta(\mathbf{x}) \right).
\end{equation}

A Boltzmann machine can be utilized to model the probability distribution \(P_{\rm target}(\cdot)\) for target patterns. By optimally configuring the values of \(\theta\), we can approximate \(P_{\rm target}(\cdot)\) with \(P_\theta(\cdot)\). In this context, hidden units, not directly associated with target patterns, can be introduced (refer to Figure~\ref{BMroles}, b). Units that directly correspond to target patterns are designated as visible, and these visible units can be further classified into input and output categories (see Figure~\ref{BMroles}, c). The Boltzmann machine is then capable of modeling the conditional distribution of output patterns given an input pattern.

\begin{figure*}[bt]
  \begin{minipage}[b]{0.3\linewidth}
    \includegraphics[width=\linewidth]{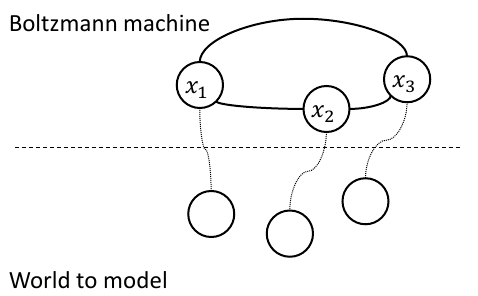}
    \subcaption{visible only}
    \label{BMvis}
  \end{minipage}
  \begin{minipage}[b]{0.3\linewidth}
    \includegraphics[width=\linewidth]{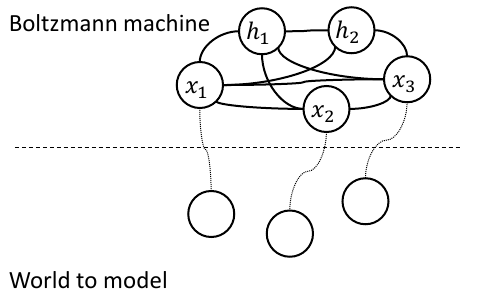}
    \subcaption{visible and hidden}
    \label{BMhidden}
  \end{minipage}
  \begin{minipage}[b]{0.3\linewidth}
    \includegraphics[width=\linewidth]{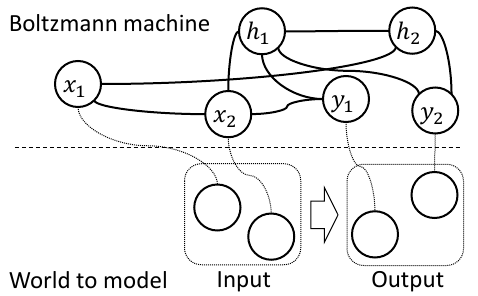}
    \subcaption{input and output}
    \label{BMinout}
  \end{minipage}
  \caption{Boltzmann machines with hidden units, input, and output, \cite{OsogamiIBM2}.}
  \label{BMroles}
\end{figure*}

\section{Marginalization Problem}

Probabilistic models often face challenges in computing the partition function, especially in normalizing to achieve a total probability of 1. To overcome this, probabilistic models can be transformed into energy-based models, eliminating the need for normalization and providing computational advantages. However, designing suitable objective functions for energy-based models remains a challenge, as discussed in \cite{YeFreWe05,KnFra20}.

In various Boltzmann machine training approaches, there exists an equation linking the probability (\(p\)) of the desired random variable distribution to the bond energy function (\(E\)) of input parameters, normalized by the known partition function (\(Z\)):

\begin{equation} 
p(x) = \frac{\exp(-E(x))}{Z}.
\end{equation} 

Different methods for determining the normalization factor are proposed. In Approach 1, the normalization factor relies on the determinant of the weight matrix \(W\):

\begin{equation} 
Z = \frac{1}{|\det(W)|}.
\end{equation} 

Another approach, as in \cite{Teh03}, also uses binding energy and determines the normalizing factor with a specific relation:

\begin{equation} 
Z(x;\theta) = \sum_{v}\sum_{h} \frac{1}{\exp(E(v,h,x;\theta))}.
\end{equation} 

For binding energy in a bipartite graph matrix, \cite{Teh03} proposes an analytical relation, called the marginal distribution:

\begin{multline}
E(v,x;\theta) = \sum_{j=1}^{|h|} \log[1+\exp(v^{T} W_{\bullet j}^{vh} +W_{j\bullet}^{vh} x+b_{j}^{h})] +\\
 v^{T} W_{}^{vx} x + v^{T} b^{v}.
\end{multline}

In the proposed solution, the energy determination is similar but numerical, derived from the Bethe approximation of Gibbs energy, \cite{Sm15,RoxanaPseudoBound,VontobelPseudoBound}:

\begin{equation} 
\exp(-E(\chi)) \sim \text{perm}_{\text{bethe}}(\chi).
\end{equation} 

Following the normalization condition, it is known that:

\begin{multline}
p(x) = \frac{\exp(-E(x))}{Z}, \\
\quad 0 \le p(x) \le 1 \Rightarrow Z = \sum \exp(-E(x)).
\end{multline}

These energy calculations are computationally feasible for sparse structural bipartite graphs, such as quasi-cyclic codes, and have been successfully employed in quasi-Newtonian methods for training deep neural networks via belief propagation, as demonstrated in \cite{Wei17, Teh03}. In particular, Bethe-energy learning could be accomplished using Bethe-permanent message-passing estimation, \cite{Huang, Vo13, Vo22}.

\section{Topology-Aware Equilibrium}

\begin{definition}\label{lrdist}
A particle is considered to be in equilibrium (ground state) if no force acts on it, or the sum of the forces acting on the particle is zero:

\begin{equation}
\label{eq:equilibrium}
\sum_{i=1}^{n} F_{i}^{q_{j}} = 0.
\end{equation}
\end{definition}\label{lrdist}

As geodesic lines on a torus are circles, we establish preliminary statements regarding circles, representing them as geodesic lines.
\begin{definition}\label{lrdist}

Let $d_{l}$ denote the left distance to the particle when walking counterclockwise from the left, and $d_{r}$ denote the right distance when walking counterclockwise from the right.

\end{definition}

\begin{lemma}
Two particles with the same charge on a circle are in equilibrium if the left distance between the particles equals the right distance.
\end{lemma}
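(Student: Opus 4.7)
My plan is to exploit the two-geodesic structure of the circle: between any two distinct points on a circle there are exactly two geodesic paths --- one traversed counterclockwise starting from the left and one traversed counterclockwise starting from the right --- whose lengths are precisely the $d_l$ and $d_r$ of the preceding definition. Treating the Coulomb-like interaction between the like charges as propagating along both geodesic paths, I would express the total force on one particle due to the other as a sum of two tangent-vector contributions at that particle, one arriving from each direction around the circle.

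I would then invoke the two parts of the hypothesis in turn. Equality of the two charges, together with a distance-monotone central force law, makes the magnitudes of the two force contributions depend only on $d_l$ and $d_r$; the equality $d_l = d_r$ then forces these magnitudes to coincide. The directional half of the argument is purely geometric: at the location of the receiving particle, the tangent to the leftward incoming geodesic and the tangent to the rightward incoming geodesic point in opposite senses around the circle. Hence the two contributions are equal in magnitude and antiparallel, and their vector sum vanishes. By Definition~1 together with equation~\eqref{eq:equilibrium}, the particle is in equilibrium, and swapping the roles of the two particles yields the same conclusion for the other, establishing the claim for the pair.

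The step I expect to be the genuine obstacle is justifying why one must sum force contributions over \emph{both} geodesic paths rather than along the unique shortest one. On a non-closed background this decomposition would be superfluous, but on the compact circle the repulsive interaction naturally propagates in both directions, and omitting either contribution would make the lemma false as soon as $d_l \neq d_r$. I would ground this either through a method-of-images picture natural to toroidal charge systems --- consistent with the paper's framework of replacing distances with circulant shifts --- or by a direct symmetry argument using the cyclic group action on the circle, which makes the two directions of propagation formally indistinguishable and thus obligates their combined contribution.
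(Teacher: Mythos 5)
Your proposal is correct and follows essentially the same route as the paper: the paper's proof simply writes the net force on a particle as the difference of two Coulomb terms, $kq_1q_2/r_r^2 - kq_1q_2/r_l^2$, one contribution per arc, and concludes that this vanishes iff $r_r = r_l$. The modeling assumption you flag as the genuine obstacle --- that the interaction is summed over both geodesic arcs rather than only the shorter one --- is taken for granted in the paper without the justification you sketch.
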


\begin{proof}

According to the equilibrium criteria in Equation \eqref{eq:equilibrium}, a particle is in equilibrium when the net force acting on it is zero. Therefore,

\begin{equation}
\label{eq:equilibrium_condition}
F^{q_{1}} = F^{q_{2}} = \frac{kq_{1}q_{2}}{r_{r}^{2}} - \frac{kq_{1}q_{2}}{r_{l}^{2}} = 0 \Leftrightarrow r_{r} = r_{l}.
\end{equation}

\end{proof}

\begin{lemma}
$n$ identical particles on a circle are in a state of equilibrium if all distances between neighboring particles are equal.
\end{lemma}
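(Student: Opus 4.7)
The plan is to reduce the $n$-particle case to a sum of pairwise cancellations already handled by Lemma 1. Fix an arbitrary particle $q_j$ on the circle and let $d$ denote the common nearest-neighbor arc length, so that the circumference is $nd$. Index the remaining $n-1$ particles by $k = 1, 2, \ldots, n-1$ in counterclockwise order starting from $q_j$; then the $k$th such particle sits at counterclockwise (left) distance $kd$ from $q_j$ and, equivalently, at clockwise (right) distance $(n-k)d$.

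Next, I would partition these $n-1$ particles into mirror-symmetric pairs about $q_j$. For every $k$ with $1 \le k < n/2$, pair the particle at left distance $kd$ with the one at left distance $(n-k)d$; the latter has right distance $kd$, so together they form a Lemma 1 configuration around $q_j$: equal left and right distances and identical charges. By Lemma 1, the contribution of each such pair to the tangential force on $q_j$ vanishes.

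The remaining bookkeeping is a parity check. If $n$ is odd, the indices $\{1,\ldots,n-1\}$ split cleanly into the pairs $\{k, n-k\}$, and the argument above exhausts all particles. If $n$ is even, one particle with $k = n/2$ is left unpaired, but it lies diametrically opposite $q_j$ with $d_l = d_r = nd/2$, so the force expression in Equation \eqref{eq:equilibrium_condition} again gives zero contribution. Summing over all pairs (and the diametrical particle when $n$ is even), the net tangential force on $q_j$ is zero; since $q_j$ was arbitrary and the configuration is rotationally symmetric, every particle is in equilibrium.

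The main obstacle is not deep; it is making the pairing and the parity bookkeeping precise, and justifying the implicit use of linear superposition of tangential forces so that Lemma 1, stated for two particles, can be applied pairwise inside a larger configuration. Once superposition along the circle is granted, the proof is essentially a clean exploitation of the rotational symmetry induced by equal spacing.
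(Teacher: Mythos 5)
Your proof is correct, but it takes a genuinely different route from the paper's. The paper's proof of Lemma~2 (Equation~\eqref{eq:lemma2_conditions}) treats each particle as acted on \emph{only by its two nearest neighbors} -- consistent with the interaction-screening (localization) assumption invoked later in the text -- and writes one two-body balance condition per particle, chained cyclically via the identification $d_{r,i}\equiv d_{l,i+1}$; the conclusion is that the whole system of conditions forces all spacings to coincide. You instead keep the full long-range pairwise interaction, fix one particle, and cancel the tangential contributions of the remaining $n-1$ particles in mirror-symmetric pairs about it (with the diametrically opposite particle handled separately when $n$ is even), invoking superposition and the computation behind Equation~\eqref{eq:equilibrium_condition} for each pair. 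What each buys: the paper's argument is shorter, matches its screened-interaction model, and reads more like a necessity argument (the chain of conditions is ``satisfied only when all distances are equal''), which is what the torus theorem's necessity direction later leans on; your argument is stronger as a sufficiency proof, since it establishes equilibrium under the unrestricted $1/r^2$ interaction and, because the cancellation is purely by arc-length symmetry, in fact for any central force law depending only on arc separation. The two caveats you flag -- linear superposition and the pairing/parity bookkeeping -- are the right ones, and both are benign; the only convention you inherit from the paper is that forces act along arcs with magnitude determined by arc length, which is what makes the diametrical particle's contribution vanish.
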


\begin{proof}

By analogy with Lemma 1, we express the conditions for particle equilibrium:

\begin{align}
\label{eq:lemma2_conditions}
    F^{q_{1}} & = \frac{kq_{1} q_{2}}{d_{r1}^{2}} - \frac{kq_{1} q_{n}}{d_{l1}^{2}} = 0 \Leftrightarrow d_{r1} = d_{l1}, \nonumber \\
    F^{q_{2}} & = \frac{kq_{2} q_{3}}{d_{r2}^{2}} - \frac{kq_{2} q_{1}}{d_{l2}^{2}} = 0 \Leftrightarrow d_{r2} = d_{l2}, \, d_{r1} \equiv r_{l2}, \nonumber \\
    & \dots \nonumber \\
    F^{q_{n-1}} & = \frac{kq_{n-1} q_{n}}{d_{rn-1}^{2}} - \frac{kq_{n-1} q_{n-2}}{d_{l\, n-1}^{2}} = 0 \Leftrightarrow d_{rn-1} = d_{l\, n-1}, \nonumber \\
    & d_{rn-2} \equiv d_{ln-1}, \nonumber \\
    F^{q_{n}} & = \frac{kq_{n} q_{1}}{d_{rn}^{2}} - \frac{kq_{n} q_{n-1}}{d_{l\, n}^{2}} = 0 \Leftrightarrow d_{rn} = d_{l\, n}, \, d_{rn-1} \equiv d_{ln}.
\end{align}

These conditions are satisfied only when all distances are equal.
\end{proof}

The arrangement of particles on the circle is clearly a uniform grid. With these auxiliary statements, we are now prepared to prove the main theorem.

\begin{figure}[h]
    \centering
    \includegraphics[width=120pt]{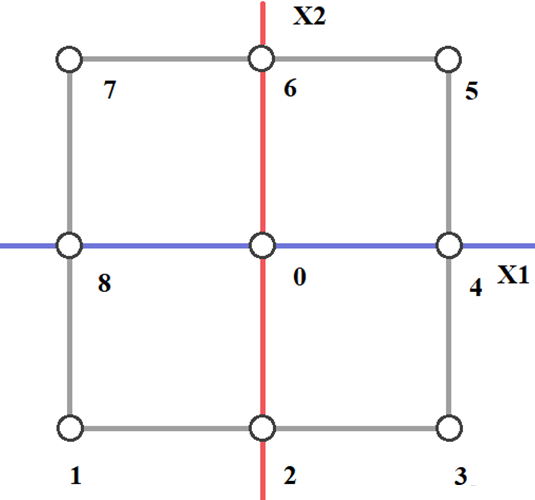}
    \caption{Quadrangular multicell of four cells on the surface of a torus.}
    \label{fig:quadrangular_multicell}
\end{figure}

\begin{theorem}\label{th1}
A collection of identical particles situated on the surface of a torus achieves equilibrium if and only if it configures into a uniform grid.
\end{theorem}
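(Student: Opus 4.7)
The plan is to reduce the torus equilibrium problem to two decoupled circular equilibrium problems, exploiting the decomposition of the torus as $S^{1}\times S^{1}$, so that its closed geodesics foliate the surface into meridional and longitudinal circles. Lemmas~1 and~2 can then be applied fiber by fiber on these geodesic circles.

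For the easy direction (uniform grid $\Rightarrow$ equilibrium), I would parametrize each particle by a coordinate pair $(i,j)$ in the two circular factors, so that a uniform grid places particles at positions $(i\Delta_{1},j\Delta_{2})$ for fixed angular spacings $\Delta_{1},\Delta_{2}$. Along every meridional circle (fixed $j$) the particles then sit at equal angular separations, so by Lemma~2 the tangential Coulomb forces restricted to that circle cancel; the analogous statement holds along every longitudinal circle (fixed $i$). Because the two geodesic directions span the tangent plane at every point of the torus, the net force on each particle vanishes, yielding equilibrium.

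For the converse (equilibrium $\Rightarrow$ uniform grid), I would fix an arbitrary particle $p$ and decompose the net Coulomb force acting on it into its meridional and longitudinal components. Equilibrium forces each component to vanish independently. I would then show that the meridional component reduces to the equilibrium condition inside the meridional circle through $p$, so that Lemma~2 applied to that circle forces the particles lying on it to be equally spaced. Iterating this argument over every meridian and every longitude produces a common product lattice structure, i.e.\ a uniform grid in the sense of the quadrangular multicell of Fig.~\ref{fig:quadrangular_multicell}.

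The main obstacle I expect is handling off-axis contributions: a particle that shares neither the meridian nor the longitude of $p$ still exerts a Coulomb force whose component along the meridian through $p$ need not vanish on its own, so the reduction to Lemma~2 is not automatic. I plan to resolve this by exploiting the translational and reflection symmetry of the flat torus, pairing each off-axis particle with its mirror image through the chosen geodesic; equilibrium together with uniqueness of the balanced configuration on each circle propagates the reflection symmetry to the whole particle system, and closes the induction from single-circle equilibrium to global grid equilibrium.
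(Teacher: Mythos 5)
Your plan is essentially the paper's own argument: the proof in the text also splits the torus into two orthogonal geodesic directions $x_{1},x_{2}$, invokes Lemma~2 along each direction for the ``equilibrium $\Rightarrow$ uniform grid'' implication, and for the converse writes the force balance on a node of the quadrangular multicell of Fig.~\ref{fig:quadrangular_multicell} projected onto the two axes, Eq.~\eqref{eq:forces_projections}. Your sufficiency argument (mirror-pairing of off-axis particles through each geodesic so their transverse components cancel) is a cleaner way of saying what the paper extracts from \eqref{eq:forces_projections}, namely that the projections cancel either because they are all equal or because they pair up with opposite signs; note that the second alternative gives the \emph{skewed} uniform mesh, which your symmetric-pairing picture should be checked to cover as well.

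The one place your plan is weaker than it needs to be is the converse direction. You correctly identify that off-axis particles contaminate the meridional force component, but your proposed fix --- pairing each off-axis particle with its mirror image through the chosen geodesic --- presupposes that the configuration is reflection-symmetric, which is exactly the grid structure you are trying to derive; the flat torus has reflection isometries, but an arbitrary equilibrium configuration need not be invariant under them, so the argument as stated is circular. The paper avoids (rather than solves) this by working only with the nearest neighbours inside one multicell, i.e.\ by tacitly invoking the interaction screening assumption stated later in Section~IV, so that the projected balance reduces to the chain of conditions \eqref{eq:lemma2_conditions} along each axis. If you adopt the same locality hypothesis, your fibre-by-fibre application of Lemma~2 goes through and the reflection-pairing step becomes unnecessary; without it, neither your argument nor the paper's establishes necessity for genuinely long-range Coulomb interactions.
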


\begin{proof}

\textit{Necessity.} Initially, we establish that a set of particles in equilibrium on the torus surface adopts a uniform grid configuration. As per Lemma 2, the orthogonal nature of directions $x_{1}$ and $x_{2}$ on the torus ensures that a sufficiently large group of particles forms a uniform grid in each direction. Equilibrium conditions \eqref{eq:lemma2_conditions} are met for both orthogonal directions. Thus, the assertion stands proven.

\textit{Sufficiency.} Next, we demonstrate that particles positioned at the nodes of a uniform grid are, indeed, in a state of equilibrium. We examine a quadrangular multicell (Fig. \ref{fig:quadrangular_multicell}) on the torus surface, articulate the sum of forces acting on the nodal element, and confirm that it satisfies the equilibrium conditions.

\begin{figure*}[b]
\begin{equation} \label{eq:forces_projections} 
\begin{aligned}
    & X1: \left| F_{X1}^{q0} = kq_{0} \left(\frac{q_{1}}{r_{x10}^{2}} + \frac{q_{8}}{r_{x80}^{2}} + \frac{q_{7}}{r_{x70}^{2}} - \frac{q_{3}}{r_{x30}^{2}} - \frac{q_{4}}{r_{x40}^{2}} - \frac{q_{5}}{r_{x50}^{2}} \right) \right. = 0, \\
    & X2: \left| F_{X2}^{q0} = kq_{0} \left(\frac{q_{1}}{r_{y10}^{2}} + \frac{q_{2}}{r_{y20}^{2}} + \frac{q_{3}}{r_{y30}^{2}} - \frac{q_{5}}{r_{y50}^{2}} - \frac{q_{6}}{r_{y60}^{2}} - \frac{q_{7}}{r_{y70}^{2}} \right) \right. = 0.
\end{aligned}
\end{equation}.
\end{figure*}
%Equations of Forces in Projections
Examine a quadrangular mesh. The force equations in axis projections are defined by equations \ref{eq:forces_projections}. It is clear that equilibrium conditions are fulfilled when either all projections of distances along one axis are identical or there is pairwise equality of projections with one having a positive sign and the other a negative sign. This results in solutions corresponding to either an orthogonal quadrilateral uniform mesh or a skewed quadrilateral uniform mesh, respectively.
\end{proof}
Similarly, the theorem can be demonstrated for a triangular uniform mesh. A more robust assertion states that these relationships are valid for a regular grid, as evident from Theorem 1 and non-constructive proof from \cite{Vuffray}.

Nevertheless, under the considered hypothesis, an alternative method for achieving equilibrium with unequal charges on an irregular grid is suggested. This approach entails a linear expansion of the charge system's dimension, substituting charges with multiple circulants and distances with circulant shifts. This results in a transformation of the charge system into a space where the charges constitute a uniform grid.

% Spin Interaction Energy
\begin{definition}\label{lrdist}
According to the Ising model, each cell of the lattice corresponds to a spin value equal to +1 or -1. Each arrangement of particles and spins is associated with energy:
\end{definition}

\begin{equation} 
E\left(\sigma \right) = \sum_{i,j} C_{ij} J_{ij} \sigma_{i} \sigma_{j},
\end{equation} 
where \(C_{ij}\) is the connectivity matrix of the spin element, and \(J_{ij}\) is the interaction energy between spins \(\sigma_{i}, \sigma_{j}\).

\begin{definition}\label{thoneline}
Energy potential is defined by considering the interaction of particles, all having charges of the same sign (for convenience of presentation), in the one-dimensional case (Fig. \ref{fig:three_particles_line}), using the Ising model.
\end{definition}

\begin{figure}[h]
    \centering
    \includegraphics[width=\columnwidth]{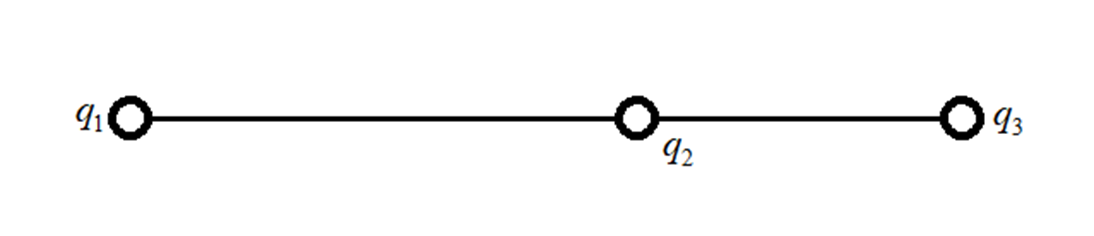}
    \caption{Three particles on one line.}
    \label{fig:three_particles_line}
\end{figure}

Charges 1 and 3, having the same sign, create an electrostatic field. The field energy created by charge \(i\) at point \(j\) is directly proportional to the magnitude of the charge and inversely proportional to the square of the distance between the points:

\begin{equation} 
J_{ij} = k\frac{q_{i}}{r_{ij}^{2}}
\end{equation} 

Quantizing charge and space, we define the values \(q_{1}\), \(q_{3}\), \(r_{12}^{2} = R_{1}\), and \(r_{32}^{2} = R_{3}\), where the electric field created by charge 1 at point 2 balances the field created at point 2 by charge 3. To determine possible states of the system where the field magnitude created by neighboring charges at the desired point reaches a minimum, we introduce the concept of a circulant \(I_{n}\)—a unit matrix of size \(e \times e\), subjected to a cyclic shift of magnitude \(n\).

\begin{statement}\label{state1}
Since the distances between particles are rational numbers, represented as proportional integers, we introduce the size of the circulant \(e\), considered as the maximum possible analogue of the distance between charges 1 and 3:
\end{statement}

\begin{equation} 
e = \sum (R_{1}, R_{3})
\end{equation} 

\begin{statement}\label{state1}
The magnitudes of the charges are multiples of the elementary charge, represented as an integer number of elementary charges. The integer ratios of possible charges to the square of the distance, not exceeding the size of the circulant, will be equal to the cyclic shift of the circulant:

\begin{equation} 
\frac{q_{i}}{R_{i}} = n, \, 0 \leq n < e.
\end{equation} 

Thus, the minimum field energy at the desired point forms a set of pairs of circulants of the form:

\[\begin{array}{l} {\frac{q_{1}}{R_{1}} = n_{1}; \frac{q_{3}}{R_{3}} = n_{3}}, \\ {n_{1} + n_{3} = e - 1}. \end{array} \]

The matrix describing the interaction of electric fields 1 and 3 at point 2, where the fields at point 2 are balanced, looks like this:
\end{statement}

\begin{equation} 
H = \left[I_{n_{1}} | 0 | I_{n_{3}}\right].
\end{equation}

In the case of three particles, the zero circulant can be reduced.

\textbf{Numerical Example 1.} The idea illustrating Statement is demonstrated for \(e = \sum (R_{1}, R_{3}) = 5\) in Figure \ref{fig:figure3}:

\begin{figure}[h]
    \centering
    \includegraphics[width=\columnwidth]{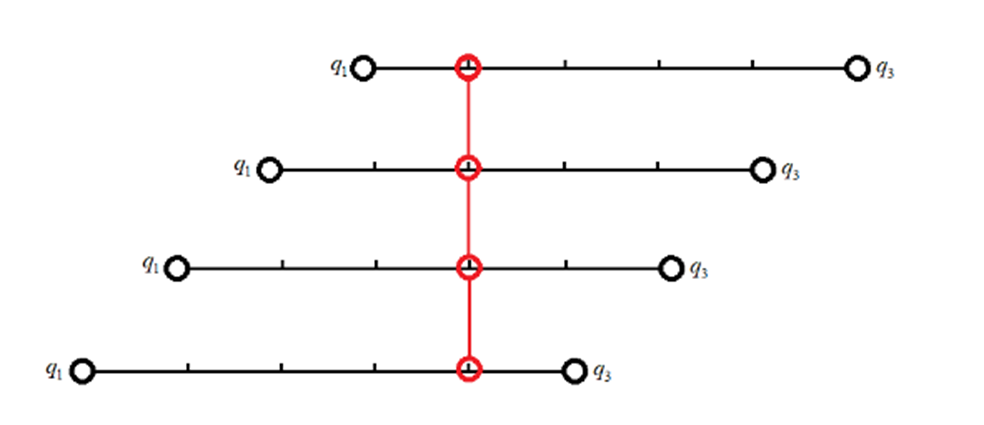}
    \caption{Coulomb's Law rules Quasi-Cyclic  \cite{Spa04}}
    \label{fig:figure3}
\end{figure}

In this example, we have a set of possible circulants that minimizes the field energy at point 2. The circulants are of the following form:

\begin{equation} 
H = \left(
\begin{array}{l}
    {\left[I_{0} | I_{5} \right]} \\
    {\left[I_{1} | I_{4} \right]} \\
    {\left[I_{2} | I_{3} \right]} \\
    {\left[I_{3} | I_{2} \right]} \\
    {\left[I_{4} | I_{1} \right]} \\
    {\left[I_{5} | I_{0} \right]} 
\end{array} \right).
\end{equation}

However, a circulant with zero shift (unit matrix) represents an equilibrium state of the system, corresponding to an infinite number of charge options. Let's choose from this set the states that can accurately describe the final system:

\begin{equation} 
H = \begin{array}{l}
    {\left(\begin{array}{l}
        {\left[I_{0} | I_{5} \right]} \\
        {\left[I_{1} | I_{4} \right]} \\
        {\left[I_{2} | I_{3} \right]} \\
        {\left[I_{3} | I_{2} \right]} \\
        {\left[I_{4} | I_{1} \right]} \\
        {\left[I_{5} | I_{0} \right]}
    \end{array}\right) \Rightarrow \left(\begin{array}{l}
        {\left[I_{1} | I_{4} \right]} \\
        {\left[I_{2} | I_{3} \right]} \\
        {\left[I_{3} | I_{2} \right]} \\
        {\left[I_{4} | I_{1} \right]}
    \end{array}\right)}
\end{array}.
\end{equation}

\begin{figure}[t]
	\begin{center}
            \includegraphics[width=\columnwidth]{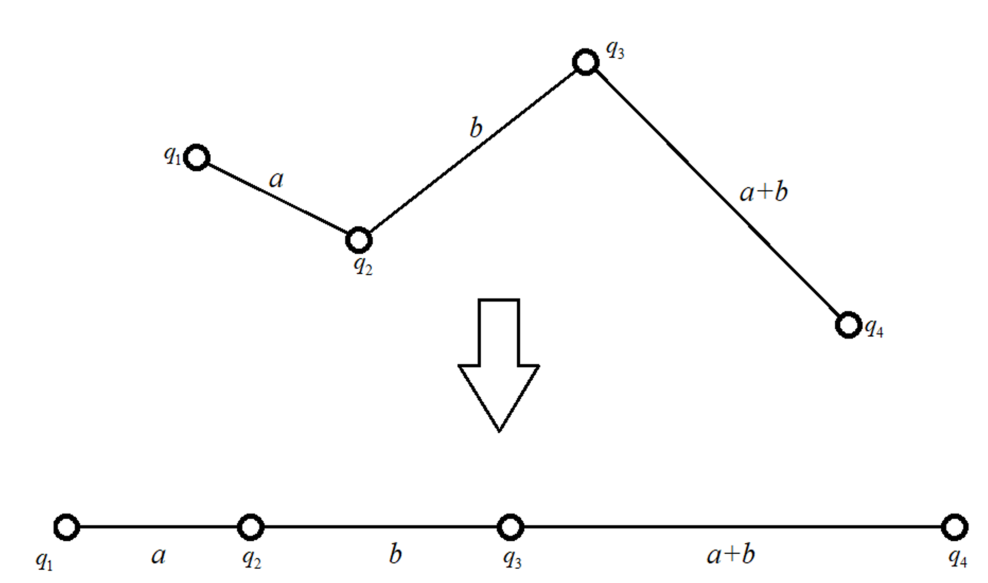}
	\end{center}
	\caption {Projection of a two-dimensional problem onto a one-dimensional case}
	\label{fig:images0001}
\end{figure}

\subsection{Two-dimensional Ising Equilibrium Problem}

Let's examine the scenario of four particles positioned on a straight line, representing the projection of the two-dimensional Ising problem. This is depicted in Figure \ref{fig:images0001}. Assume the squares of distances between particles are given by:
\begin{equation} 
\begin{array}{l} {r_{12}^2=a}, \\ {r_{23}^2=b}, \\ {r_{34}^2=a+b}. \end{array}
\end{equation} 
Here, $a$ and $b$ denote the squares of distances between the particles.

Due to the interaction screening (localization) of particles, particle 1 does not influence particle 3, and particle 4 does not affect particle 2. 

Now, let's explore solutions for two scenarios: the equilibrium/ground state of particles 1-3 and the equilibrium/ground state of particles 2-4.

For the 1-3 system, the matrix describing the interaction between charges can be expressed as:
\begin{equation} 
H =\left( 
\begin{array}{cccc} {I_{a}^{a+b}} & {0} & {I_{b}^{a+b}} & {0} \\ {0} & {I_{b}^{a+2b}} & {0} & {I_{a+b}^{a+2b}} \end{array} \right).
\end{equation}

This matrix comprises four circulants. The initial circulant signifies the interaction between particles 1 and 2, the second circulant characterizes the interaction between particles 2 and 3, the third circulant delineates the interaction between particles 3 and 4, and the fourth circulant illustrates the interaction between particles 1 and 4. In these expressions, $I_p^o$ represents a circulant of size $o$ with a cyclic shift of $p$. To standardize the circulants to the same size, we can utilize the following relations:
\begin{equation} 
\begin{array}{l} {c=\text{LCM}\left(a+b,a+2b\right)}, \\ {\overline{a}=\frac{a\cdot c}{a+b}; \overline{b_1}=\frac{b\cdot c}{a+b}}, \\ {\overline{b_2}=\frac{b\cdot c}{a+2b}; \overline{a+b}=\frac{\left(a+b\right)\cdot c}{a+2b}}. \end{array}
\end{equation} 
In these equations, $\text{LCM}(x,y)$ represents the least common multiple of $x$ and $y$. The bars over $a$, $b_1$, $b_2$, and $a+b$ denote the adjusted values.

By employing these correlations, we can derive a system of interaction among four particles in quasi-cyclic notation:
\begin{equation} H =\left( 
\begin{array}{cccc} {I_{\overline{a}}^{c}} & {I_{\overline{b_1}}^{c}} & {I_{\overline{b_2}}^{c}} & {I_{\overline{a+b}}^{c}} \end{array} \right) .
\end{equation} 
This matrix comprises four circulants, all standardized to the same size $c$, aligning with the minimum field energy at points 1-3 and 2-4.

{\flushleft \textbf{Numerical Example 2.}}
Consider an example with the following distances:
\begin{equation} 
\begin{array}{l} {r_{12}^{2} =2}, \\ {r_{23}^{2} =3}, \\ {r_{34}^{2} =5}. \end{array}
\end{equation} 

Describing the impact of particles 1 and 3 on particle 2, as well as particles 2 and 4 on particle 3, we obtain the following matrix:

\begin{equation} 
H =\left(  \begin{array}{cccc} {I_{2}^{5}} & {0} & {I_{3}^{5}} & {0} \\ {0} & {I_{3}^{8}} & {0} & {I_{5}^{8}} \end{array} \right).  
\end{equation} 

This matrix comprises four circulants with sizes 5 and 8. The first circulant represents the interaction between particles 1 and 2, the second circulant represents the interaction between particles 2 and 3, the third circulant represents the interaction between particles 3 and 4, and the fourth circulant represents the interaction between particles 1 and 4.

Note that the circulants have different sizes due to the need for adjustment to ensure they all have the same size. In this case, the adjusted sizes are 5 and 8. This matrix contains information about possible charge configurations resulting in the minimum field energy at points 2-3.

To standardize the description to circulants of the same size, we can use the following relationships:
\begin{equation} 
\begin{array}{l} {c=\text{LCM}(5,8)=40}, \\ {\overline{2}=\frac{2\cdot c}{2+3}=16; \overline{3}=\frac{3\cdot c}{2+3}=24}, \\ {\overline{5}=\frac{5\cdot c}{3+5}=15; \overline{2+3}=\frac{(2+3)\cdot c}{3+5}=25}. \end{array}
\end{equation} 
Using these relationships, we can obtain a parity-check matrix with a circulant size of 40 and cyclic shift values of 16, 15, 24, and 25:
\begin{equation}  H =\left( 
\begin{array}{cccc} {I_{16}^{40}} & {I_{15}^{40}} & {I_{24}^{40}} & {I_{25}^{40}} \end{array} \right).
\end{equation} 

In summary, we can apply a similar approach to the one-dimensional case to find possible charge configurations resulting in the equilibrium state of the system with the lowest field energy magnitude at a desired point. We can construct a parity-check matrix describing charge interactions in the two-dimensional case using distances between particles and insights gained from the one-dimensional case.

%\subsection{Interaction of four particles in two-dimensional space without projecting the particles onto one axis}

\begin{figure}[t]
	\begin{center}
            \includegraphics[width=\columnwidth]{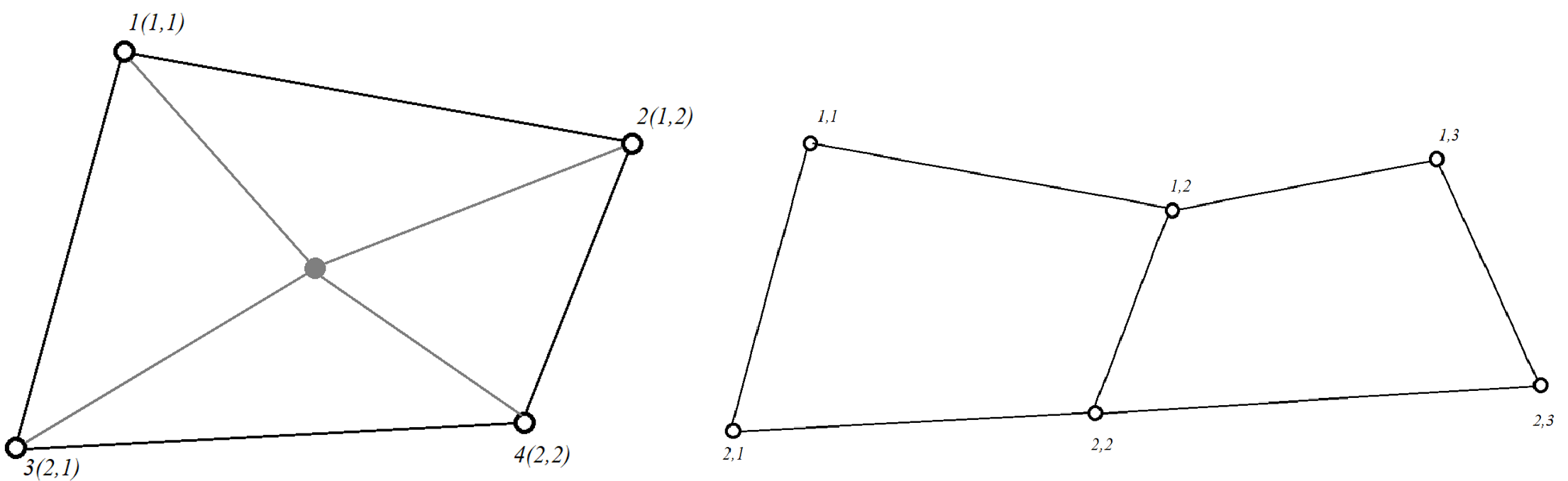}
	\end{center}
	\caption {(left) Interaction of 4 particles in 2d space. (right) Interaction of 6 particles in 2d space}
	\label{fig:images0011111}
\end{figure}

\subsection{Interaction of Four Particles in Two-Dimensional Space Without Projecting Onto One Axis}

Now, let's explore the interaction of four particles in two-dimensional space without projecting them onto one axis (illustrated in Fig.~\ref{fig:images0011111}, left).

For four particles in two-dimensional space, we can create a parity-check matrix with two circulant rows: one for interaction along the X-axis and one for interaction along the Y-axis. The order of each circulant is determined by the sum of the projections of distances between particles onto the respective axis. We define these values as:
\begin{equation} 
\begin{array}{l} {L_{x} =\sum _{i=1}^{4}L_{x_{i} x^{*} }  ;L_{x_{i} x^{*} } =L_{x_{i} } }, \\ {L_{y} =\sum _{i=1}^{4}L_{y_{i} y^{*} } ;L_{y_{i} y^{*} } =L_{y_{i} }  }. \end{array}
\end{equation} 

We then define an interaction matrix as:
\begin{equation} 
H^{j} =\left(\begin{array}{cccc} {I_{L_{x1}^{j} }^{L_{x}^{j} } } & {I_{L_{x2}^{j} }^{L_{x}^{j} } } & {I_{L_{x3}^{j} }^{L_{x}^{j} } } & {I_{L_{x4}^{j} }^{L_{x}^{j} } } \\ {I_{L_{y1}^{j} }^{L_{y}^{j} } } & {I_{L_{y2}^{j} }^{L_{y}^{j} } } & {I_{L_{y3}^{j} }^{L_{y}^{j} } } & {I_{L_{y4}^{j} }^{L_{y}^{j} } } \end{array}\right).
\end{equation} 

However, we may need to adjust the sizes of the circulants in the interaction matrix to ensure they are all the same. This can be achieved by finding the least common multiple of $L_{x}^{j}$ and $L_{y}^{j}$ and scaling the sizes of the circulants accordingly:
\begin{equation} 
\begin{array}{l} {L^{j} =LCM\left(L_{x}^{j} ,L_{y}^{j} \right)}, \\ {\overline{L_{xi}^{j} }=L_{xi}^{j} \cdot \frac{L^{j} }{L_{x}^{j} } ;\overline{L_{yi}^{j} }=L_{yi}^{j} \cdot \frac{L^{j} }{L_{y}^{j} } }. \end{array}
\end{equation} 

The final form of the interaction matrix will be:
\begin{equation} 
H^{j} =\left(\begin{array}{cccc} {I_{\overline{L_{x1}^{j} }}^{L_{}^{j} } } & {I_{\overline{L_{x2}^{j} }}^{L_{}^{j} } } & {I_{\overline{L_{x3}^{j} }}^{L_{}^{j} } } & {I_{\overline{L_{x4}^{j} }}^{L_{}^{j} } } \\ {I_{\overline{L_{y1}^{j} }}^{L_{}^{j} } } & {I_{\overline{L_{y2}^{j} }}^{L_{}^{j} } } & {I_{\overline{L_{y3}^{j} }}^{L_{}^{j} } } & {I_{\overline{L_{y4}^{j} }}^{L_{}^{j} } } \end{array}\right).
\end{equation} 

This matrix describes the interaction between charges in two-dimensional space and contains information about possible charge configurations resulting in the lowest magnitude of the field energy at a desired point.

\subsection{Interaction of Six Particles in Two-Dimensional Space Without Projecting Onto One Axis}

Let's delve into the interaction scenario involving six particles in two-dimensional space, where a pair of particles participates in both \(j\) and \(j+1\) cells (as depicted in Fig. \ref{fig:images0011111}, b). The parity-check matrices for these two adjacent cells can be defined as follows:

\begin{multline}
  H^{j} = \begin{pmatrix} I_{\overline{L_{x1}^{j}}}^{L_{}^{j}} & I_{\overline{L_{x2}^{j}}}^{L_{}^{j}} & I_{\overline{L_{x3}^{j}}}^{L_{}^{j}} & I_{\overline{L_{x4}^{j}}}^{L_{}^{j}} \\ I_{\overline{L_{y1}^{j}}}^{L_{}^{j}} & I_{\overline{L_{y2}^{j}}}^{L_{}^{j}} & I_{\overline{L_{y3}^{j}}}^{L_{}^{j}} & I_{\overline{L_{y4}^{j}}}^{L_{}^{j}} \end{pmatrix},  
  \\
  H^{j+1} = \begin{pmatrix} I_{\overline{L_{x2}^{j}}}^{L_{}^{j}} & I_{\overline{L_{x2}^{j}}}^{L_{}^{j+1}} & I_{\overline{L_{x4}^{j}}}^{L_{}^{j}} & I_{\overline{L_{x4}^{j}}}^{L_{}^{j+1}} \\ I_{\overline{L_{y2}^{j}}}^{L_{}^{j}} & I_{\overline{L_{y2}^{j+1}}}^{L_{}^{j+1}} & I_{\overline{L_{y4}^{j}}}^{L_{}^{j}} & I_{\overline{L_{y4}^{j+1}}}^{L_{}^{j+1}} \end{pmatrix}.
\end{multline}

The columns of each matrix represent the contributions of particles when listed with row priority. This approach can be extended to form quadrangular or triangular cells in four directions using sets of particles. The spatial configuration of particle locations is toroidal, enabling circular bypass of the configuration both "from bottom to top" and "from left to right" from a physical perspective on the boundary conditions.

\begin{statement}\label{state1} 
For a system of interacting particles in a two-dimensional space described by the Ising model, with charges having the same sign and creating an electric (magnetic) field, and distances between particles given by the sum of projections on the X and Y axes, the interaction matrix will have two circulant rows: interaction along the X axis and along the Y axis. The matrix describing the interaction of particles can be defined as a pair of circulant matrices that correspond to a set of pairs of circulants. To bring the orders of the circulants to a common value, the least common multiple of the distances is used to adjust the sizes.
\end{statement}

 %When considering a situation describing the interaction of six particles in two dimensions, a pair of particles will be involved in both j and j + 1 cells. The configuration of the space in which the particles are located is a toroidal space.
 
\begin{figure}[t]
	\begin{center}
            \includegraphics[width=100pt]{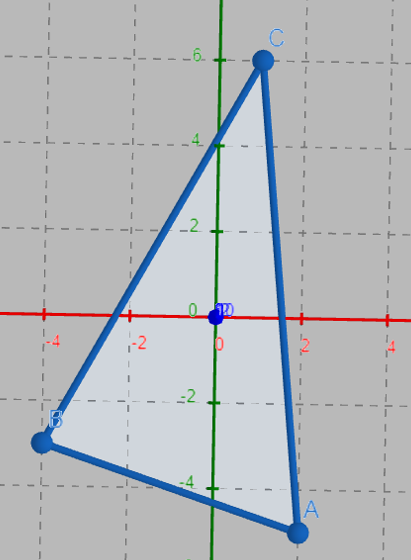}
	\end{center}
	\caption {Interaction of three particles on a plane. Example 2 from article \cite{TanSSFCost04}}
	\label{images0016}
\end{figure}

\begin{figure}[t]
	\begin{center}
            \includegraphics[width=\columnwidth]{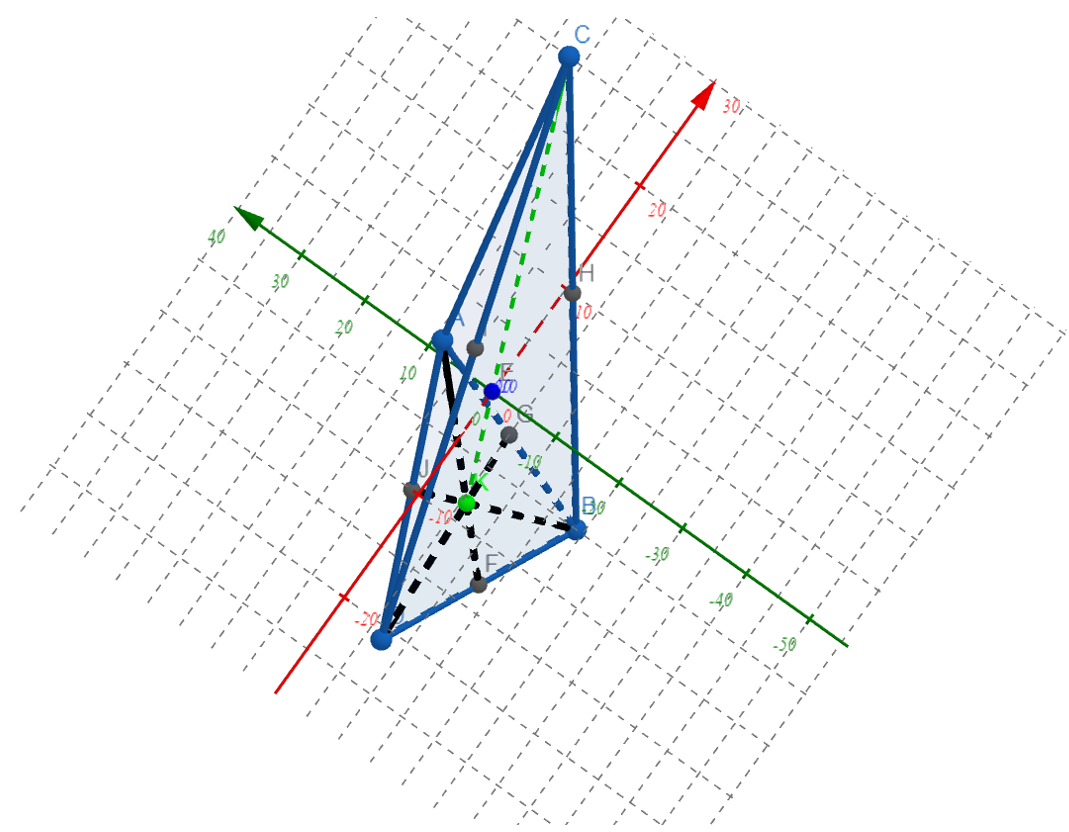}
	\end{center}
	\caption {Interaction of 4 particles in space. Example 3 from \cite{TanSSFCost04}}
	\label{fig:images0017}
\end{figure}

\begin{figure}[t]
	\begin{center}
            \includegraphics[width=\columnwidth]{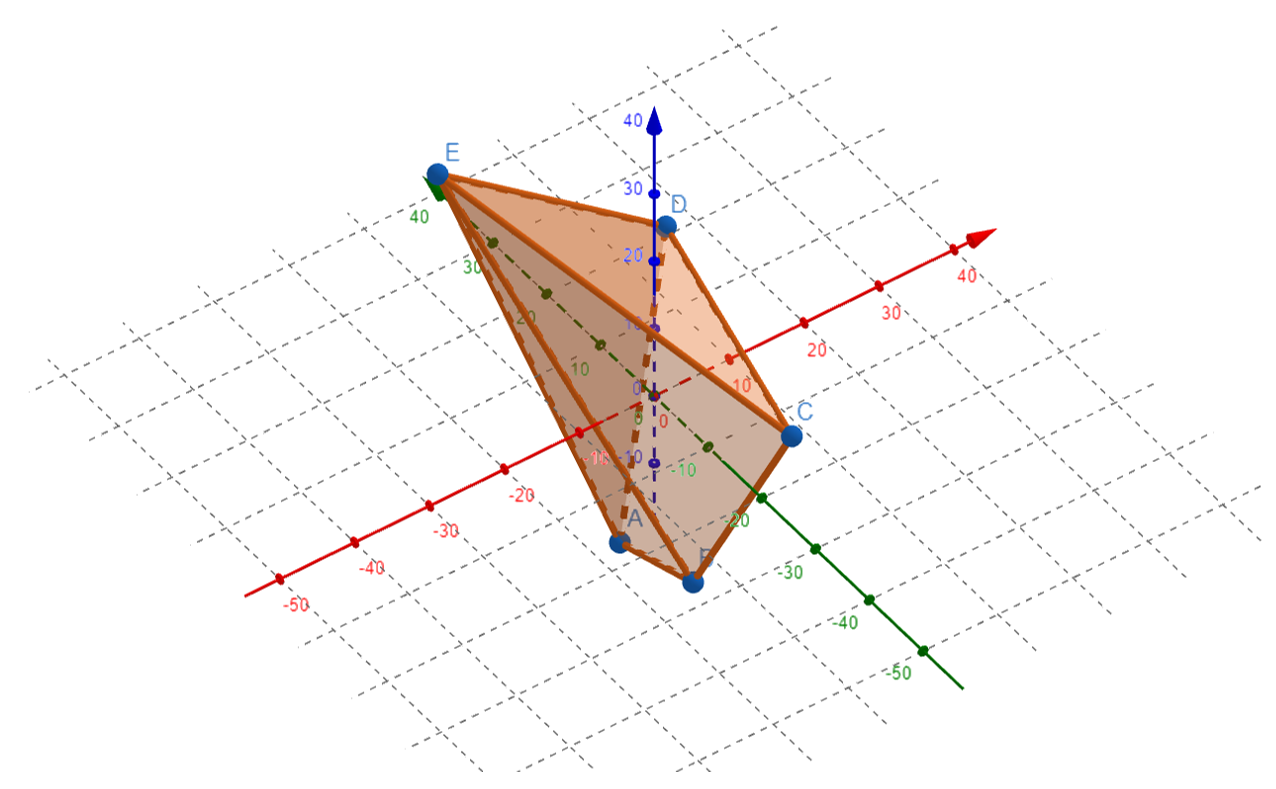}
	\end{center}
	\caption {Interaction of five particles in  space }
	\label{fig:images0019}
\end{figure}
\begin{figure}[t]
	\begin{center}
            \includegraphics[width=100pt]{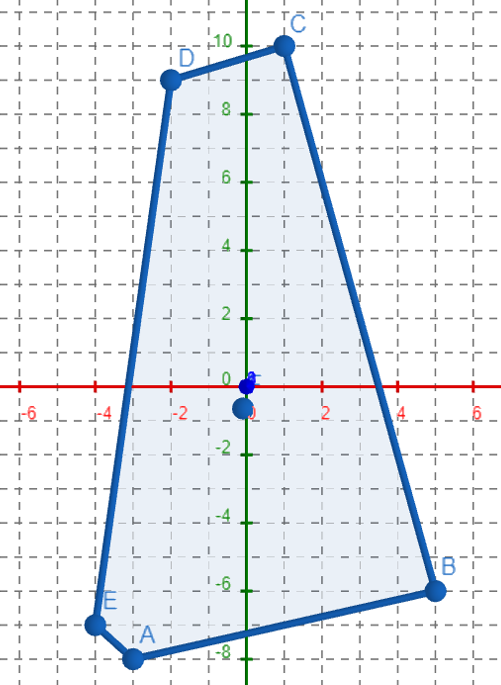}
	\end{center}
	\caption {Interaction of five particles in a plane}
	\label{fig:images0020}
\end{figure}

\textbf{Numerical Examples Using QC-LDPC Codes}

In the paper \cite{TanSSFCost04}, quasi-cyclic block and convolution LDPC codes based on circulant matrices are explored. Their geometric representations can be constructed using the proposed approach. The elementary codes from this paper, based on 2x3, 3x4, and 3x5 circulants, can be depicted as a set of particles in Euclidean space.

For instance, consider the code with the parity-check matrix:
\begin{equation} 
H = \begin{pmatrix}
I_1^7 & I_2^7 & I_4^7 \\
I_6^7 & I_5^7 & I_3^7
\end{pmatrix},
\end{equation} 
representing the interaction of three particles on a plane (as shown in Fig.~\ref{images0016}). Each row corresponds to a particle, and each column corresponds to a check node. The identity matrices $I_i^7$ represent connections between particles and check nodes, arranged in a circulant pattern. The position of the center of mass of a triangular plate with particles at the vertices aligns closely with the origin. In two-dimensional toric space, an analogue for such a configuration is an isosceles right triangle with vertices equidistant from the origin.

Another QC code with the parity-check matrix (\cite{TanSSFCost04}), Ex. 3): 
\begin{equation} 
H=\left(\begin{array}{cccc} {I_{1}^{26} } & {I_{5}^{26} } & {I_{25}^{26} } & {I_{21}^{26} } \\ {\begin{array}{c} {I_{9}^{26} } \\ {I_{3}^{26} } \end{array}} & {\begin{array}{c} {I_{19}^{26} } \\ {I_{15}^{26} } \end{array}} & {\begin{array}{c} {I_{17}^{26} } \\ {I_{23}^{26} } \end{array}} & {\begin{array}{c} {I_{11}^{26} } \\ {I_{11}^{26} } \end{array}} \end{array}\right),
\end{equation} 
can be represented as a system of four particles in space  (in~Fig.~\ref{fig:images0017}).

Four particles form a tetrahedron with the center of mass coinciding with the origin. In toric three-dimensional space, this configuration corresponds to four points of the cube, forming a trihedral angle, equidistant from the origin, with a parity-check matrix: 
\begin{equation} 
H=\left(\begin{array}{cccc} {I_{1}^{31} } & {I_{2}^{31} } & {I_{4}^{31} } & {\begin{array}{cc} {I_{8}^{31} } & {I_{16}^{31} } \end{array}} \\ {\begin{array}{c} {I_{5}^{31} } \\ {I_{25}^{31} } \end{array}} & {\begin{array}{c} {I_{10}^{31} } \\ {I_{19}^{31} } \end{array}} & {\begin{array}{c} {I_{20}^{31} } \\ {I_{7}^{31} } \end{array}} & {\begin{array}{c} {\begin{array}{cc} {I_{9}^{31} } & {I_{18}^{31} } \end{array}} \\ {\begin{array}{cc} {I_{14}^{31} } & {I_{28}^{31} } \end{array}} \end{array}} \end{array}\right),
\end{equation} 
describing the interaction of five particles in space. The particle configuration defining the code is shown in~Fig.~\ref{fig:images0019}.

Consider the following parity-check matrix
\begin{equation}  H=\left(\begin{array}{ccccc} {I_{10}^{11} } & {I_{9}^{11} } & {I_{8}^{11} } & {I_{7}^{11} } & {I_{6}^{11} } \\ {I_{1}^{11} } & {I_{2}^{11} } & {I_{3}^{11} } & {I_{4}^{11} } & {I_{5}^{11} } \end{array}\right),
\end{equation} 
describing the interaction of five particles in a plane. A polygon with charges at the vertices is shown in (in~Fig.~\ref{fig:images0020}).

%The cell center is close to the origin. It is impossible to tile a Cartesian plane (in~Fig.~\ref{fig:images0021})  or a two-dimensional torus with regular pentagons (Pentagonal tiling problem). Such a tiling occurs on hyperbolic surfaces or, as a central projection of an inscribed dodecahedron, on a sphere. A cell formed by five particles is a hyperbolic cell. The use of such cells (either alone or in combination with quadrangular cells) makes it possible to construct Low Density Quasi-Cyclic parity-check matrices (QC-LDPC codes) corresponding to states with minimum energy (ground state).

\subsection{Circular Hyperboloid}

Now, let's explore a system of particles arranged in two belts on a cylindrical surface (Figure 4a). The connections between corresponding particles on neighboring belts align with the generatrices of the cylinder:

\begin{equation} 
\left\{
    \begin{array}{l}
        {\frac{x^{2}}{a^{2}} + \frac{y^{2}}{a^{2}} = 1} \\
        {-\infty < z < \infty}
    \end{array},
\right.
\end{equation} 
forming a regular grid.

For example, if a disturbance is introduced into the upper belt (Figure 4b), causing a displacement of particles at a certain angle $\alpha$, the particles will maintain their regular grid formation. The system of connections between corresponding particles remains linear; however, the connections no longer form a cylinder but rather a single-sheet hyperboloid, corresponding to the equation:

\begin{equation} 
\frac{x^{2}}{a^{2}} + \frac{y^{2}}{a^{2}} - \frac{z^{2}}{f(\alpha)} = 1,
\end{equation} 
where $f(\alpha)$ is a function dependent on the magnitude of the energy disturbance introduced into the system.

\begin{figure}[h]
    \centering
    \includegraphics[width=\columnwidth]{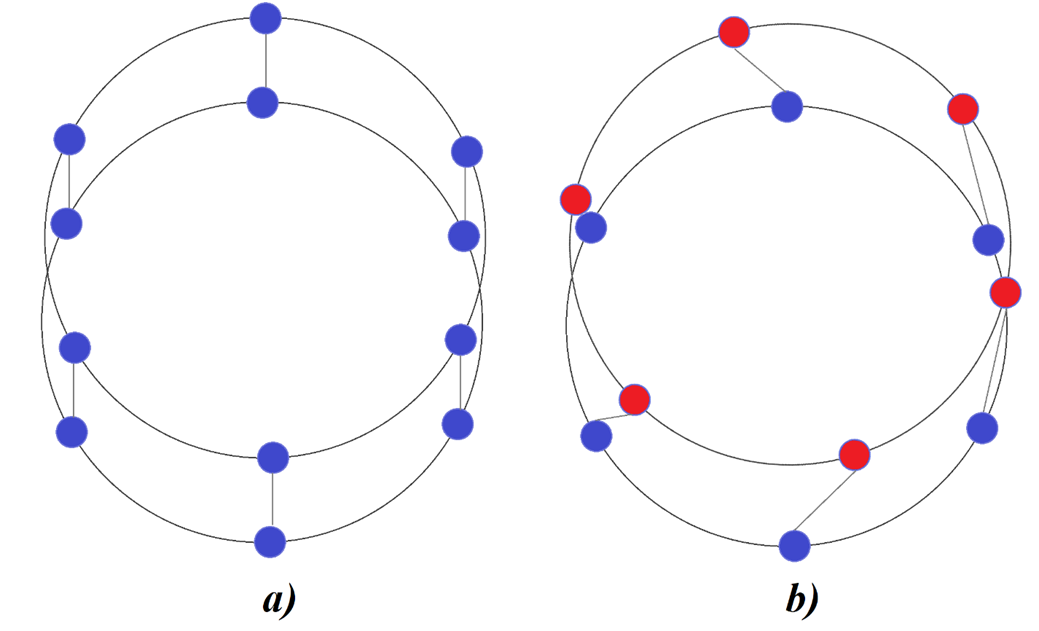}
    \caption{Figure 4. Charges on a) cylindrical and b) hyperbolic surfaces.}
    \label{fig:figure4}
\end{figure}

\begin{figure}[t]
	\begin{center}
            \includegraphics[width=\columnwidth]{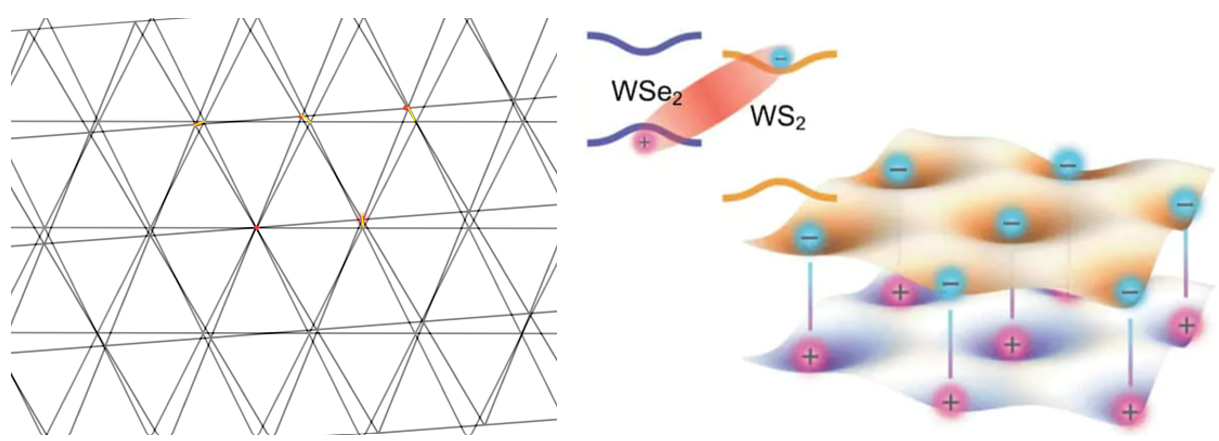}
	\end{center}
	\caption {Fermions are held in shifted lattices by a certain angle  (left),  regular triangular shape made of wolfram disulfide and wolfram diselenide from \cite{Xio23}  }
	\label{fig:FermBosLat}
\end{figure}

\begin{figure}[t]
	\begin{center}
            \includegraphics[width=3.2in]{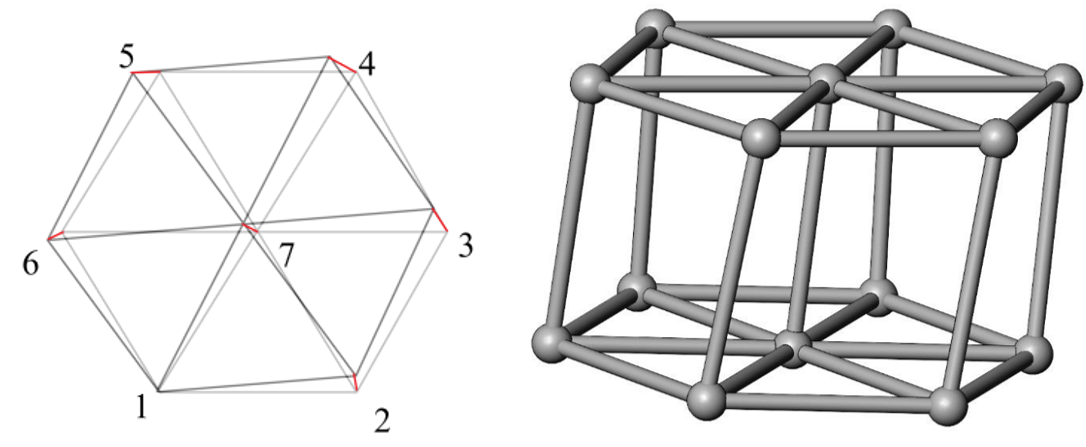}
	\end{center}
	\caption { The 7 dipoles hexagonal prism   }
	\label{fig:HexPrism}
\end{figure}

\begin{figure}[t]
	\begin{center}
            \includegraphics[width=2.6in]{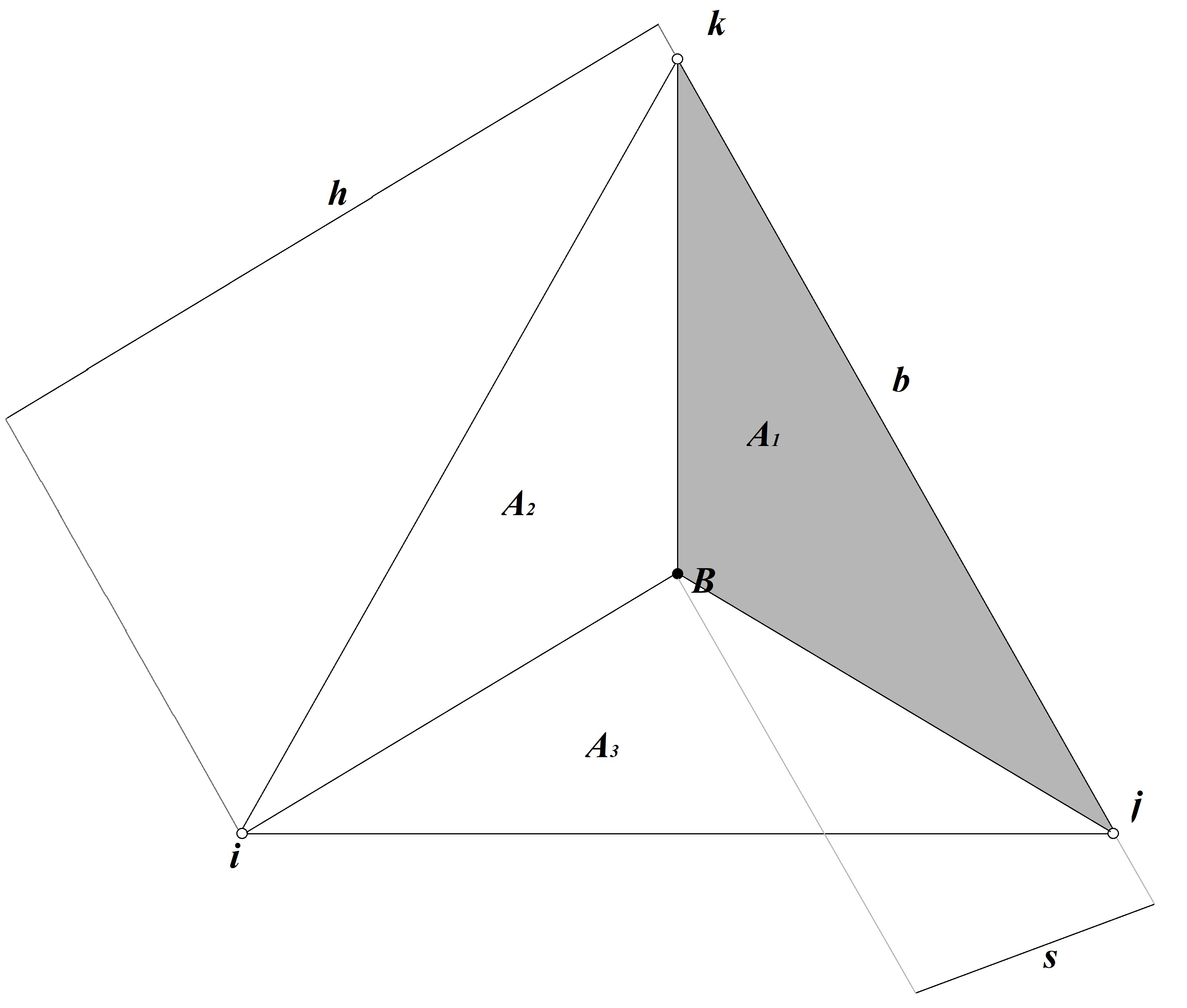}
	\end{center}
	\caption { Finite Element Analysis local coordinate system grid from \cite{Sege76}  }
	\label{fig:SegLoc}
\end{figure}

\begin{figure}[t]
	\begin{center}
            \includegraphics[width=1.8in]{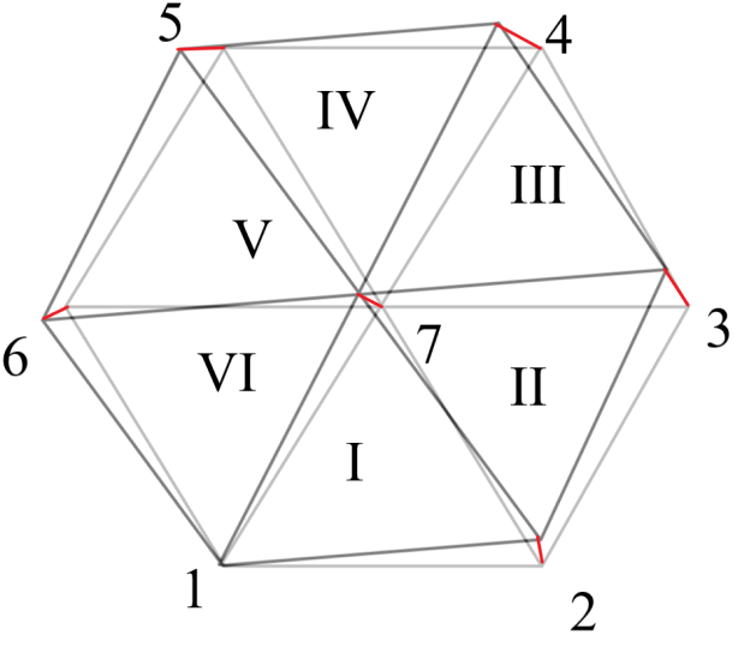}
	\end{center}
	\caption { Grid discretization of interaction  }
	\label{fig:FBcells}
\end{figure}

\subsection{Example of Circular Hyperboloid: MET QC-LDPC Codes}

The study in \cite{Xio23} investigates an experiment involving two lattices with regular triangular shapes made of wolfram disulfide and wolfram diselenide, each containing negatively charged electrons and positively charged electron holes in the nodes. Fermions are held in lattice nodes through strong interaction. When the lattices (Fig. \ref{fig:FermBosLat}) are shifted by a certain angle $\alpha$ relative to the stationary center of rotation, and photon energy is pumped, fermions enter into electromagnetic interaction, forming dipoles-excitons. Due to their dipole nature, excitons act as dielectric insulators, maintaining a strict lattice structure through strong interaction.

Consider a hexagonal prism containing 7 dipoles, one of which serves as the center of rotation for the grid (Fig. \ref{fig:HexPrism}). By introducing a Cartesian coordinate system centered at the grid's rotation center, we can derive a matrix describing the effect of grid displacement on particle interaction at a fixed center. Let $R$ be the step of the triangular grid, and $L_{ij}$ represent the distance between two nodes. Then,

\begin{equation} \label{1.21)} 
\begin{array}{l} {\Delta R_{jx} = L_{1jx} - L_{1jx} \cos \alpha }, \\ {\Delta R_{jy} = L_{1jy} - L_{1jy} \sin \alpha }. \end{array} 
\end{equation} 

Introduce quantization:

\begin{equation} \label{1.22)} 
{dR = \left\lfloor \frac{R}{q} \right\rfloor},  {qdR = \overline{R}}, {\overline{\Delta R} = \left\lfloor \frac{\Delta R}{dR} \right\rfloor}. 
\end{equation} 

Then, we can describe the interaction matrix $H$ as follows:

{\footnotesize
\begin{equation}\label{1.23)} 
\left(\begin{array}{ccccccc} {C_{0}^{q}} & {C_{\overline{\Delta R_{2x}}}^{q}} & {C_{\overline{\Delta R_{3x}}}^{q}} & {C_{\overline{\Delta R_{4x}}}^{q}} & {C_{\overline{\Delta R_{5x}}}^{q}} & {C_{\overline{\Delta R_{6x}}}^{q}} & {C_{\overline{\Delta R_{7x}}}^{q}} \\ {C_{0}^{q}} & {C_{\overline{\Delta R_{3y}}}^{q}} & {C_{\overline{\Delta R_{3y}}}^{q}} & {C_{\overline{\Delta R_{4y}}}^{q}} & {C_{\overline{\Delta R_{5y}}}^{q}} & {C_{\overline{\Delta R_{6y}}}^{q}} & {C_{\overline{\Delta R_{7y}}}^{q}} \end{array}\right) .
\end{equation}
}

At a sufficient distance from the center of rotation, the bevel of the dipoles may exceed the value at which the connection between the i-th electrons and the holes of the corresponding layers is preserved. In this case, the influence passes to neighboring nodes, changing the direction of the circulant shift. Additionally, three main directions can be distinguished on such a grid, and a local non-orthogonal system of three coordinates L1-3 can be introduced. Coordinates vary from 0 to 1 (exceeding 1 when the grid is beveled), \cite{Sege76}.

The local coordinate system is connected by perpendiculars, projecting from a point to a side. Each coordinate expresses the ratio of the area of the triangle constructed on the corresponding side and point to the area of the original triangle (Fig. \ref{fig:SegLoc}). The coordinate of point B in the L-system will have the value:

\begin{equation} \label{1.24)} 
\begin{array}{l} {B_{L} = (L_{1}, L_{2}, L_{3}) = \left(\frac{S_{\Delta Bjk}}{S_{\Delta ijk}}, \frac{S_{\Delta Bki}}{S_{\Delta ijk}}, \frac{S_{\Delta Bij}}{S_{\Delta ijk}}\right)} \\ {=\left(\frac{h_{B1}}{h_{1}}, \frac{h_{B2}}{h_{2}}, \frac{h_{B3}}{h_{3}}\right)} .\end{array}
\end{equation}

Each displaced lattice node participates in the formation of six grid cells and can be represented, according to the proposed hypothesis, by six sets of columns of three circulants describing the shift of the node relative to the local coordinates of each of the 6 cells. By the properties of additivity, the columns describing the interaction in each cell will be the sum of three columns describing the shifts of each of the cell nodes relative to the local coordinate system.

\section{Implementing the Proposed Approach in Finite Geometry QC-LDPC Codes}

Examine the QC-LDPC code introduced in \cite{Fos01}, featuring mixed circulant sizes with various automorphism structures:

\begin{multline}
\label{1.1)} 
H=\left(\begin{array}{c} {\begin{array}{ccccc} {I_{17} } & {I_{17} } & {I_{17} } & {I_{17} } & {I_{17} } \end{array}} \\ {H_{1} } \\ {H_{2} } \\ {H_{3} } \\ {H_{4} } \end{array}\right)=  \\
\left(\begin{array}{c} {\begin{array}{ccccc} {I_{17} } & {I_{17} } & {I_{17} } & {I_{17} } & {I_{17} } \end{array}} \\ {C_{85}^{0} +C_{85}^{24} +C_{85}^{40} +C_{85}^{71} +C_{85}^{84} } \\ {C_{85}^{1} +C_{85}^{49} +C_{85}^{58} +C_{85}^{81} +C_{85}^{84} } \\ {C_{85}^{3} +C_{85}^{14} +C_{85}^{32} +C_{85}^{78} +C_{85}^{84} } \\ {C_{85}^{16} +C_{85}^{33} +C_{85}^{50} +C_{85}^{67} +C_{85}^{84} } \end{array}\right). 
\end{multline} 

Here, $H_{i} $ represents circulant size 85 with weight 5, determined by the number of circulant terms with weight 1. The suitability of this parity-check matrix as a representation of Nitrogen is readily confirmed using the following logic. For ease of geometric analogies, let's focus on the first 3 rows of the matrix $H$:

\begin{multline} 
\label{1.2)} 
H=\left(\begin{array}{c} {\begin{array}{ccccc} {I_{17} } & {I_{17} } & {I_{17} } & {I_{17} } & {I_{17} } \end{array}} \\ {H_{1} } \\ {H_{2} } \end{array}\right)=\\ \left(\begin{array}{c} {\begin{array}{ccccc} {I_{17} } & {I_{17} } & {I_{17} } & {I_{17} } & {I_{17} } \end{array}} \\ {C_{85}^{0} +C_{85}^{24} +C_{85}^{40} +C_{85}^{71} +C_{85}^{84} } \\ {C_{85}^{1} +C_{85}^{49} +C_{85}^{58} +C_{85}^{81} +C_{85}^{84} } \end{array}\right). 
\end{multline} 

In the context of the considered model, this matrix can be obtained by collapsing the matrix along one of the coordinates:

\begin{multline} 
\label{1.3)} 
\left(\begin{array}{c} {\begin{array}{ccccc} {C_{85}^{k} } & {C_{85}^{k} } & {C_{85}^{k} } & {C_{85}^{k} } & {C_{85}^{k} } \end{array}} \\ {\begin{array}{ccccc} {C_{85}^{0} } & {C_{85}^{24} } & {C_{85}^{40} } & {C_{85}^{71} } & {C_{85}^{84} } \end{array}} \\ {\begin{array}{ccccc} {C_{85}^{1} } & {C_{85}^{49} } & {C_{85}^{58} } & {C_{85}^{81} } & {C_{85}^{84} } \end{array}} \end{array}\right)\to  \\ \left(\begin{array}{c} {\begin{array}{ccccc} {I_{17} } & {I_{17} } & {I_{17} } & {I_{17} } & {I_{17} } \end{array}} \\ {C_{85}^{0} +C_{85}^{24} +C_{85}^{40} +C_{85}^{71} +C_{85}^{84} } \\ {C_{85}^{1} +C_{85}^{49} +C_{85}^{58} +C_{85}^{81} +C_{85}^{84} } \end{array}\right) .
\end{multline}

Consider a scenario where five indistinguishable particles are evenly positioned from the center, described by spherical coordinates $\left(r,\varphi ,\theta \right)$ within a matrix. Now, let's explore the electron clouds of a carbon atom within the proposed model. Carbon carries a charge of 6, distributed as 2 electrons in the first energy level and 4 electrons in the second. A code matrix, presented in a general form, is constructed as follows:

\begin{equation} \label{1.4)} 
\left\{\begin{array}{c} {r} \\ {\varphi } \\ {\theta } \end{array}\right\}\left(\begin{array}{cccccc} {C_{k}^{r_{1} } } & {C_{k}^{r_{1} } } & {C_{k}^{r_{2} } } & {C_{k}^{r_{2} } } & {C_{k}^{r_{2} } } & {C_{k}^{r_{2} } } \\ {C_{k}^{\varphi _{1} } } & {C_{k}^{\varphi _{2} } } & {C_{k}^{\varphi _{3} } } & {C_{k}^{\varphi _{4} } } & {C_{k}^{\varphi _{5} } } & {C_{k}^{\varphi _{6} } } \\ {C_{k}^{\theta _{1} } } & {C_{k}^{\theta _{2} } } & {C_{k}^{\theta _{3} } } & {C_{k}^{\theta _{4} } } & {C_{k}^{\theta _{5} } } & {C_{k}^{\theta _{6} } } \end{array}\right) .
\end{equation} 

To determine the numerical characteristics of the circulants, specific conditions are defined. Specifically, for collapsing the matrix along the radii, it is necessary to observe the multiplicity of the order of the circulant to the radii and their sum:

\begin{equation} \label{1.5)} 
\begin{array}{l} {k\mod{r_{1}} =0}, \\ {k \mod{r_{2} =0}},
\\ {k \mod{\left(r_{1} +r_{2} \right)
}=0}. \end{array} 
\end{equation} 

In a more general case, expressed as:

\begin{equation} \label{1.6)} 
\begin{array}{l} {k \mod{r_{i} =0}}, \\ {k \mod{\sum _{i=1}^{N}r_{i}  =0} }. \end{array} 
\end{equation} 

Or, in the most general case (e.g., when the radii are prime numbers),  ${a,r_{i} \in {\mathbb N}}$:

\begin{equation} \label{1.7)} 
\begin{array}{l} {k \mod{\left(a\left(\sum _{i=1}^{N}r_{i}  \right)\prod _{i=1}^{n}r_{i}  \right)}}. \end{array}
\end{equation}

The presented mathematical framework introduces a cycle-based gauge, denoted as TS(a,0) or the Schrödinger-Heisenberg-Bohr-Fossorier Electron Cloud Gauge (SHBF Cycle Gauge), represented by the equation \cite{Fossorier04}:

\begin{equation} \label{1.9)} 
\left(\sum _{i=1}^{N-1}\Delta _{ji,ji+1} \left(l_{i} \right) \right) \mod {k}=0.
\end{equation} 

This equation ensures that the sum of shifts of the circulant along the row (or column) is a multiple of the dimension of the circulant. A specific matrix meeting this criterion is given by:

\begin{equation} \label{1.10)} 
\left\{\begin{array}{c} {r} \\ {\varphi } \\ {\theta } \end{array}\right\}\left(\begin{array}{cccccc} {C_{48}^{24} } & {C_{48}^{24} } & {C_{48}^{36} } & {C_{48}^{36} } & {C_{48}^{36} } & {C_{48}^{36} } \\ {C_{48}^{1} } & {C_{48}^{7} } & {C_{48}^{13} } & {C_{48}^{19} } & {C_{48}^{25} } & {C_{48}^{31} } \\ {C_{48}^{23} } & {C_{48}^{17} } & {C_{48}^{47} } & {C_{48}^{41} } & {C_{48}^{35} } & {C_{48}^{29} } \end{array}\right) .
\end{equation} 

It is crucial that the condition along the row is satisfied when the shift of all circulants of the row changes simultaneously by a multiple of:

\begin{equation} \label{1.11)} 
S=k/N .
\end{equation} 

Following this, the matrix is collapsed along the radii, resulting in a structure denoted as \(H\):

\begin{multline} \label{1.12)} 
H=\left(\begin{array}{c} {\begin{array}{cccccc} {C_{8}^{0} } & {C_{8}^{0} } & {C_{8}^{2} } & {C_{8}^{2} } & {C_{8}^{2} } & {C_{8}^{2} } \end{array}} \\ {C_{48}^{1} +C_{48}^{7} +C_{48}^{13} +C_{48}^{19} +C_{48}^{25} +C_{48}^{31} } \\ {C_{48}^{23} +C_{48}^{17} +C_{48}^{47} +C_{48}^{41} +C_{48}^{35} +C_{48}^{29} } \end{array}\right)  \to  \\ \left(\begin{array}{c} {\begin{array}{cccccc} {I_{8} } & {I_{8} } & {I_{8} } & {I_{8}} & {I_{8}} & {I_{8} } \end{array}} \\ {C_{48}^{1} +C_{48}^{7} +C_{48}^{13} +C_{48}^{19} +C_{48}^{25} +C_{48}^{31} } \\ {C_{48}^{23} +C_{48}^{17} +C_{48}^{47} +C_{48}^{41} +C_{48}^{35} +C_{48}^{29} } \end{array}\right).
\end{multline} 

The first row, where the collapse was performed, consists of 6 circulants of size 8 and weight 1. Two shift 0 circulants correspond to the first energy level, and four shift 2 circulants correspond to the second energy level, representing the electron configuration \(1s^{2} ;2s^{2} 2p^{2}\).

Lines 2 and 3 are circulants of weight 6 and size 48. The parity-check matrix of the quasi-cyclic LDPC code from Fossorier-Shu Lin's work corresponds to electron clouds in the ground state of a Nitrogen and Carbon atom under Spherical Topology and SHBF Cycle Gauge.

\begin{statement}  The carbon and nitrogen state diagram can be interpreted as a transition between states of matter under different temperatures (SNR). Such QC-LDPC codes representation allows greatly reducing the complexity of physical and chemical properties of matter. The existence of matter under different environmental properties is equivalent to the existence of a code on graphs with the required capacity properties. Moreover, Bethe-permanent allows simplifying it to linear complexity.
\end{statement}

\begin{figure}[t]
	\begin{center}
            \includegraphics[width=\columnwidth]{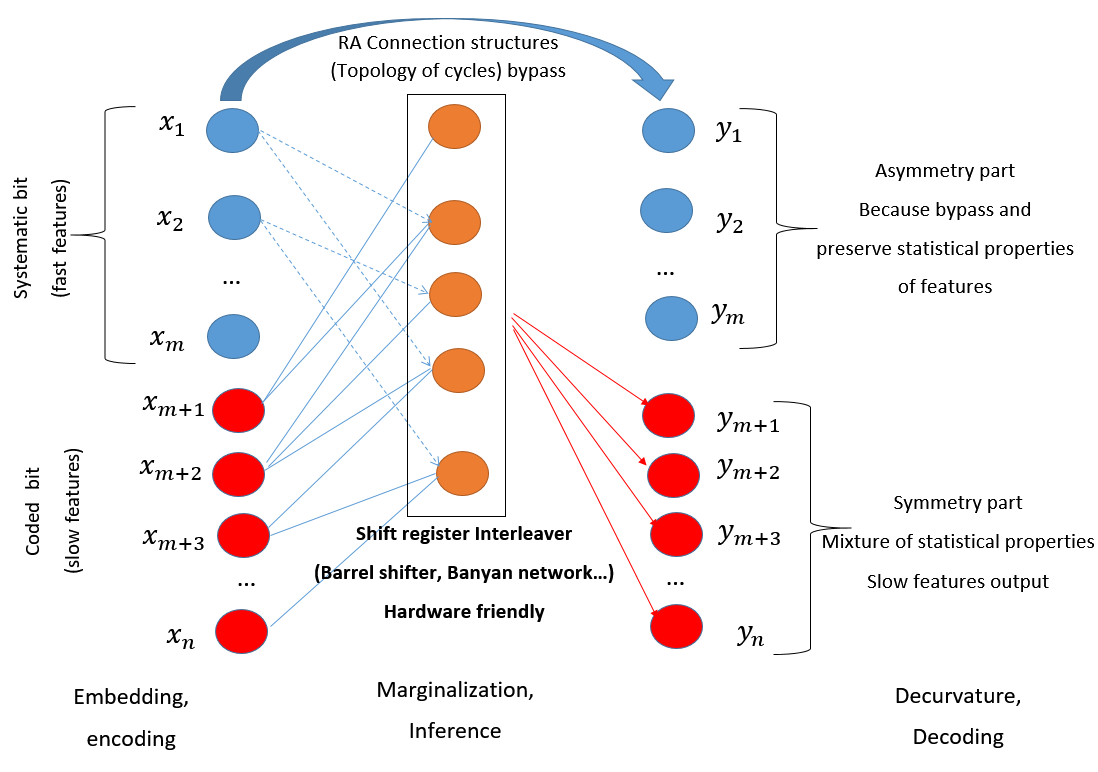}
	\end{center}
	\caption {Repeat Accumulate Code on the graph non-linear low dimensional toric and hyperbolic circular topology embedding as combination of symmetry and asymmetry(loss surface lattice/Hessian), \textbf{fast features avoid non-linear transformation} }
	\label{fig:images0099999}
\end{figure}

 \begin{figure}
\centering
\includegraphics[width=\columnwidth]{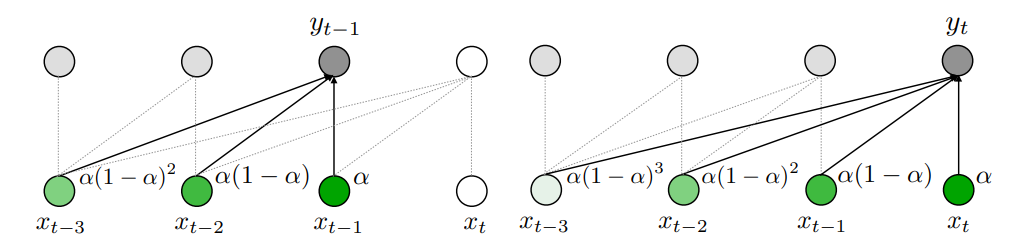} 
  \caption{Illustration of the exponential moving average (EMA) approach, which averages the input values X with weights decaying exponentially over timesteps. Equivalent to GeIRA QC-LDPC codes}
  \label{Mega}
\end{figure}

\section{ Demonstrating Equivalence: Generalized Repeat Accumulate Codes and Cage-Graph QC-LDPC Codes in NLP Transformer DNN}

Let's examine the application of an LDPC code parity-check matrix, denoted as $H$, in the context of encoding (embedding) a message (Tensor) $m$ for error correction. It's noteworthy that the nonlinear mapping and encoding can be viewed as embedding in machine learning (ML).

\begin{equation} 
x = H \times m,
\end{equation} 

Here, $m$ represents the transmitted message, $H$ is the parity-check matrix, $G$ is the generator matrix, and $x$ is the encoded message.

The encoding process satisfies the zero syndrome condition:

\begin{equation} 
G = H^{-1}, \quad H \times x = 0.
\end{equation} 

If the generator matrix of the code, by definition, satisfies $G = H^{-1}$, then:

\begin{equation} 
H \times G^T = 0.
\end{equation}
 The parity-check matrix of QC-LDPC codes for Repeat Accumulated (RA) codes can be expressed as:
\begin{equation} 
H = [ \, H_1 | H_2  ] .
\end{equation}
Here, $H_1$ denotes the parity-check part, which, when multiplied by systematic bits (analogous to features in DNN), is directly embedded through fast connections. Simultaneously, $H_2$ represents the special structures of the RA submatrix.For instance, if the RA parity-check matrix is defined as follows: 

\begin{equation} 
H_{2} =\left[\begin{array}{cccccc} {I} & {I} & {0} & {...} & {0} & {0} \\ {0} & {I} & {I} & {...} & {0} & {0} \\ {0} & {0} & {I} & {...} & {0} & {0} \\ {\vdots } & {0} & {0} & {...} & {0} & {0} \\ {\vdots } & {\vdots } & {\vdots } & {...} & {\vdots } & {\vdots } \\ {\vdots } & {\vdots } & {\vdots } & {...} & {I} & {0} \\ {0} & {0} & {0} & {...} & {I} & {I} \\ {0} & {0} & {0} & {...} & {0} & {I} \end{array}\right]  ,
\end{equation} 
where $I$ represents the identity circulant matrix of size $L \times L$. The inverse matrix $H_{2}^{-1}$ becomes triangular, and the encoding transforms into $G \times x$, as illustrated through the use of a convolutional accumulator (Fig. \ref{fig:images00888}, \cite{RiUr01,DivMcel98,Sarah10}), or within DNN embedding accomplished through dynamic convolution (involution and convolution features) (Fig. \ref{fig:images00999}, \cite{Sch91,WuFa19,LiHu21}).

The Mega and Mega-chunk Attention models are built upon a Generalized Irregular Repeat Accumulate (GeIRA) protograph, as outlined in ~\cite{DivMcel98, Li05}. In these models, the Exponential Averaging Alpha filter is employed to propel features forward without transformation, ensuring robustness to high temperatures and a quicker recovery. Additionally, we demonstrate that such Repeat Accumulate (RA) codes on graph structures can be generalized to irregular RA codes (GeIRA). This is exemplified through the Transformer MEGA architecture, which is represented by GeIRA codes on the graph, as depicted in \cite{DivMcel98, Li05}, Fig. \ref{Mega}.

 \begin{figure}
\centering
\includegraphics[width=\columnwidth]{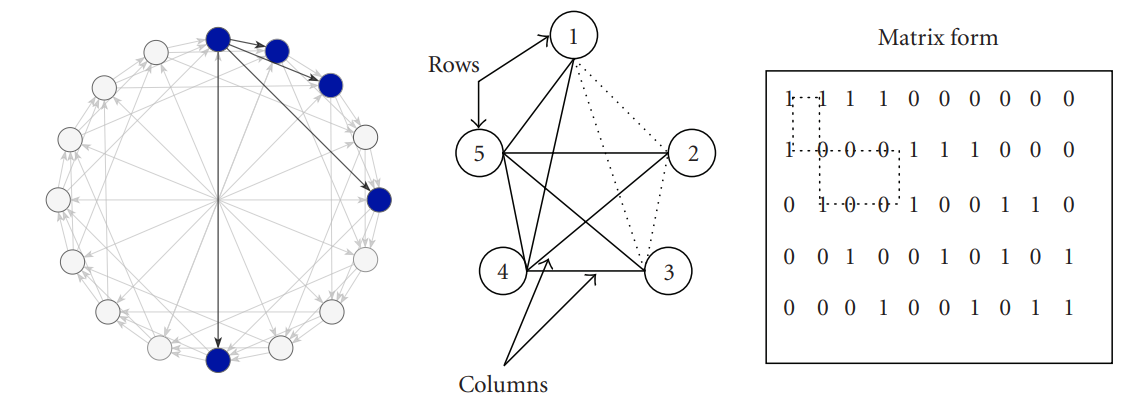} 
  \caption{Chord protocol weight 4 cage-graph. Distance graph and related LDPC code parity-check matrix. ChordMixer Transformer DNN Architecture }
  \label{Chord}
\end{figure}

A cutting-edge attention architecture (according LRA challenge \cite{LRA21}) in the realm of long-range interactions is introduced in the work by ~\cite{Kha22}. This architecture is built upon the P2P Chord protocol, as outlined in \cite{Stoica2001} (refer to \ref{Chord}, left). ChordMixer leverages Cage graphs as distance graphs to formulate its attention mechanism, as elucidated in the study by ~\cite{Ma07}. The utilization of Cage graphs enables ChordMixer to design the attention mechanism in a manner that is analogous to the  LDPC codes.

 \begin{figure}
\centering
\includegraphics[width=\columnwidth]{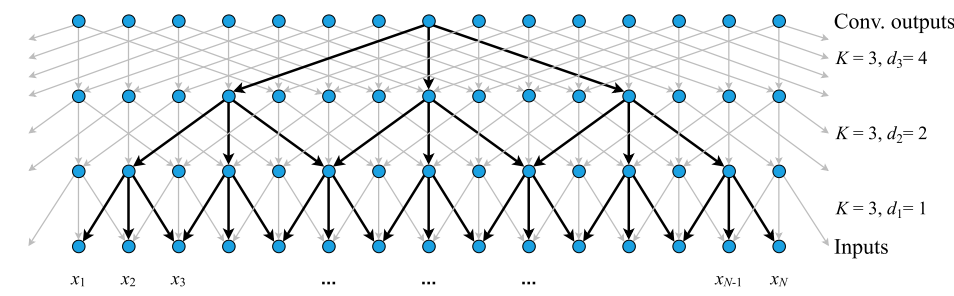} 
  \caption{The architecture of symmetric dilated convolutions. Computation graph of regular weight 3 codes. CDIL Transformer DNN Architecture ~\cite{Lei23}}
  \label{CDIL}
\end{figure}

Finally, we examine the dilated Convolutional model ("CDIL") introduced in the publication by ~\cite{Lei23}. This convolutional code is founded on the 2P Chord protocol and is categorized as a column weight 3 convolutional code, as illustrated in Fig. \ref{CDIL}. The theoretical characterization of the trapping set for such a code was comprehensively provided by ~\cite{Vasic09}, proposing an efficient decoding method based on \cite{Xi21}.

\begin{figure}[t]
	\begin{center}
            \includegraphics[width=\columnwidth]{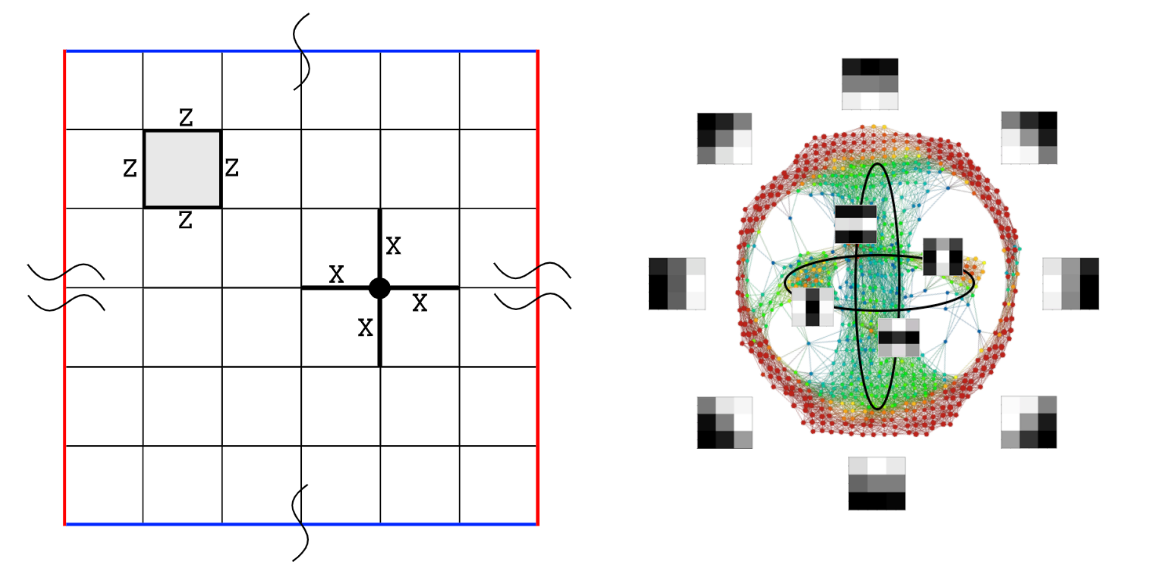}
	\end{center}
	\caption { Left side: Kitaev's Toric 2D  code \cite{Kitaev03}. Right side: Natural image Topology First layer, CIFAR-10,\cite{CaGa20}  }
	\label{Fig555}
\end{figure}

 \begin{figure}
\centering
\includegraphics[width=\columnwidth]{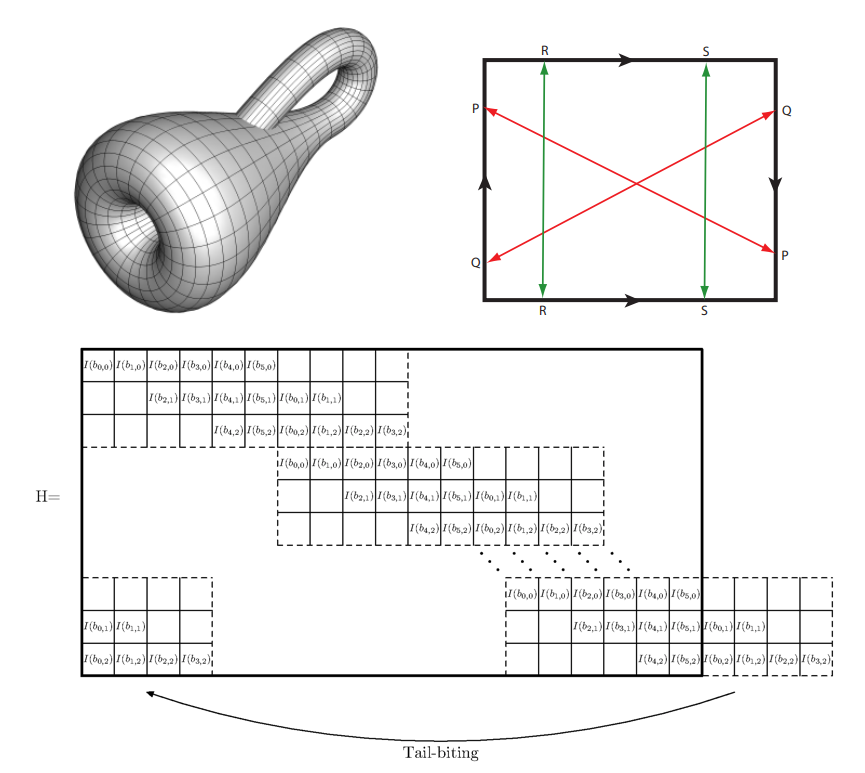} 
  \caption{Klein bottle and Zigangirov's Spatially-Coupled Convolutional QC-LDPC code \cite{FelZi99}  as half-twisted Hyperbolic Torical Topology}
  \label{Fig333}
\end{figure}

An intriguing observation lies in the link to Torical quantum codes under Ising models, established through the Calderbank-Shor-Steane (CSS) stabilizer code construction \cite{CalShor96,Kitaev03,Gleb21}, as illustrated in Figure \ref{Fig555} on the left. This connection with the ground states of the Ising model has been substantiated in previous subsections. The study  \cite{CaGa20} utilized topological data analysis to gain insights into the computations performed by convolutional deep neural networks. Analyzing the internal states of these networks, they modified computations to enhance speed and generalization across various datasets.

Their analysis also resulted in a new geometry for sets of features on image datasets (MNIST, CIFAR-10, SVHN, ImageNet, etc.). This allowed them to construct analogues of CNNs for various geometries, including graph structures derived from topological data analysis.

Building on previous research, such as Carlsson et al.'s study \cite{CaGa20}, which demonstrated that local patches in images concentrate around a primary circle and can be modeled by algebraically defined functions (Fig. \ref{Fig555} on the right), their work led to the construction of features for images admitting a Klein bottle Topology.

The natural generalization of Torical topology under infinite-size circulant (automorphism) is the half-twisted Klein bottle topology equivalent to convolutional codes: Spatially-coupled Tailbitted QC-LDPC codes, Polar Spatially-Coupled codes, and Serial Turbo codes (Fig. \ref{Fig333}). This connection elucidates the effectiveness of convolutional codes in computer vision tasks. Moreover, it demonstrates the link between topological analysis and the calculation of topological invariants, such as Betti numbers, indicating the number of multidimensional voids formed by code and pseudo-code words of trapping sets (TS enumerators).

\textbf{Proposition 3:} The research paper \cite{TanSSFCost04} establishes a connection between QC block codes and spatially coupled QC convolutional codes with limited memory termination. This connection, observed under the Ising model, is rooted in the klein bottle twisted topology present in block QC codes featuring a toric (hyperbolic) topology. Exploiting this relationship enables the creation of a dynamic convolution aligned with the mentioned topology, incorporating low-complexity non-linear multiplicative correlation. This dynamic convolution involves fast and slow features, systematic bits (fast feature, pass forward without changes), and coded bits in channel codes, utilizing highly efficient convolutional accumulator processing commonly employed in wireless encoders, such as QC-LDPC codes \cite{Li05,DivMcel98,RiUr01,Sarah10}. This aligns with the TDA analysis findings by Carlsson \cite{CaGa20}, as originally proposed by Schmidhuber (1991) and described in papers \cite{Sch91,WuFa19,LiHu21}.

The integration of QC-LDPC block and convolutional codes, along with their relationship to Polar and Turbo codes, through serial processing on Klein bottle and Torus Topology, significantly contributes to enhancing the quality of structured codes on graph models with high automorphism. This integration plays a crucial role in various applications, spanning material science, complexity theory, and efficient implementations of deep neural networks. A key metric for assessing the gauge quality of any dynamic system is the Trapping sets enumerator, encompassing codewords and pseudocodewords. It is asserted that the Trapping sets enumerator serves as a fundamental metric for evaluating the gauge, defining the loss surface (energy landscape) of spherical and toroidal dynamic systems, including Ising models, DNNs, control systems, differential equations, social networks, and other related topology graph models and tensor networks.

\textbf{Proposition 4:} Spatially-Coupled Convolutional codes resembling a Klein bottle topology can be constructed by adopting a suitable graph structure, achieved by twisting opposite edges on a torus. The link between block and convolutional QC-LDPC codes is evident in \cite{TanSSFCost04}, while the connection between Polar codes and LDPC codes, post-sequential processing with a successive cancellation scheduler, is discussed in \cite{Fer13,Fos15}. A similar connection is showcased between Turbo codes and QC-LDPC codes under serial processing \cite{JiaPsP06}, providing a potential approach for hardware-efficient optimal embedding using Codes on the graphs TS enumerators and related Topology Invariant from Topology Data Analysis.

Non-linear sparse factorization serves as a low-dimensional embedding capturing the structure of high-dimensional data. Leveraging topology data analysis and information geometry, we can analyze data curvature, facilitating efficient and accurate tasks like classification, interpolation, and regression. Stochastic dynamic systems accommodate inherent data uncertainty, ensuring robust predictions in real-world scenarios. The spectral gap, tied to topological properties like Bethe tree or lattice, Bethe permanent (pseudocodewords TS(a,b)) \cite{Vo13,RoxanaPseudoBound}, and permanent (codewords TS(a,0)) \cite{Mac01,RoxanaCodeBound}, along with the graph spectral bound \cite{Ta01,Mi05,Lu06}, are crucial. Neural networks with minimal clustering coefficients, particularly entangled nets (optimized Trapping set enumerator), exhibit promising traits for constructing high-capacity and high-performance artificial neural networks \cite{Lu06}. Improving spectral properties, especially bethe permanent and permanent, enhances Trapping set enumerators in graph models, leading to increased capacity and improved performance in DNNs.

\begin{figure}[t]
	\begin{center}
            \includegraphics[width=\columnwidth]{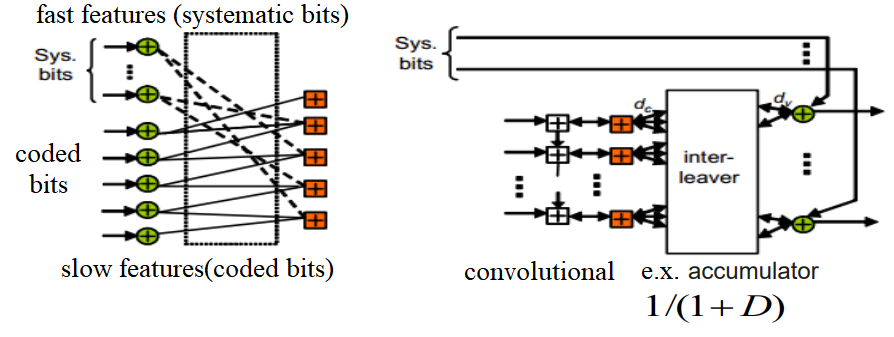}
	\end{center}
	\caption {Repeat Accumulate Code on the graph structures, fast part systematic bits(features for DNN) and repeat accumulated coded bits, \cite{Li05,RiUr01,DivMcel98,Sarah10}  }
	\label{fig:images00888}
\end{figure}

\begin{figure}[t]
	\begin{center}
            \includegraphics[width=\columnwidth]{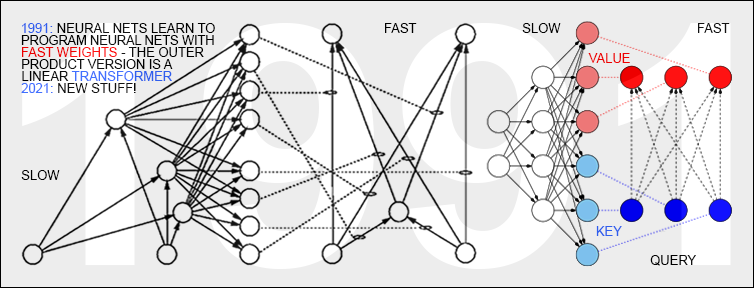}
	\end{center}
	\caption {The dynamic approach combines: convolutional and block codes by incorporating attention and convolution within a deep neural network (DNN) proposed by Schmidhube at 1991, \cite{Sch91}. RA codes example of such approach }
	\label{fig:images00999}
\end{figure}

\section{CONCLUSION}

This paper introduces a pioneering approach to achieve equilibrium in the ISING Hamiltonian when dealing with unevenly distributed charges on an irregular grid. By leveraging QC-LDPC codes and the Boltzmann machine, our method involves expanding the system dimensionally, replacing charges with circulants, and expressing distances through circulant shifts. This systematic mapping of the charge system onto a space transforms the irregular grid into a uniform configuration, applicable to Torical and Circular Hyperboloid Topologies. The fundamental definitions and notations related to QC-LDPC Codes, Multi-Edge QC-LDPC codes, and the Boltzmann machine are covered. The paper delves into the marginalization problem in code on the graph probabilistic models for evaluating the partition function, exploring exact and approximate estimation techniques. Rigorous proof is provided for the attainability of equilibrium states for the Boltzmann machine under Torical and Circular Hyperboloid, laying the foundation for the application of our methodology. Practical applications of our approach are investigated in Finite Geometry QC-LDPC Codes, particularly in Material Science. The paper further explores its effectiveness in the domain of Natural Language Processing Transformer Deep Neural Networks, examining Generalized Repeat Accumulate Codes and Cage-Graph QC-LDPC Codes, Convolutional codes. The versatile and impactful nature of our topology-aware equilibrium method is demonstrated across diverse scientific domains, transcending specific section delineations.

\end{document}